\newcommand{\ARightarrow}{\stackrel{a}{\Rightarrow}}
\newcommand{\comment}[1]{}
\newcommand{\shortdividerline}{
\begin{center} \line(1,0){150} \end{center}
}
\newcommand{\dividerline}{\begin{center}\hrule\end{center}}
\newcommand{\scriptf}{\mathcal{F}}
\newcommand{\scripte}{\mathcal{E}}
\newcommand{\scriptv}{\mathcal{V}}
\newenvironment{proof}{\paragraph{\bf Proof:}}{\hspace*{\fill}\(\Box\)}
\newtheorem{theorem}{Theorem}
\newtheorem{claim}{Claim}
\newtheorem{corollary}{Corollary}
\newtheorem{definition}{Definition}
\newtheorem{lemma}{Lemma}
\def\noflash#1{\setbox0=\hbox{#1}\hbox to 1\wd0{\hfill}}
\begin{document}

\title{Iterative Approximate Byzantine Consensus\\ in Arbitrary Directed Graphs - Part II:\\ Synchronous and Asynchronous Systems
 \footnote{\normalsize This research is supported
in part by National
Science Foundation award CNS 1059540 and Army
Research Office grant W-911-NF-0710287. Any opinions, findings, and conclusions or recommendations expressed here are those of the authors and do not
necessarily reflect the views of the funding agencies or the U.S. government.}}

\author{Nitin Vaidya$^{1,3}$, Lewis Tseng$^{2,3}$, and Guanfeng Liang$^{1,3}$\\~\\
 \normalsize $^1$ Department of Electrical and Computer Engineering, \\
 \normalsize $^2$ Department of Computer Science,
 and\\ \normalsize $^3$ Coordinated Science Laboratory\\ \normalsize University of Illinois at Urbana-Champaign\\ \normalsize Email: \{nhv, ltseng3, gliang2\}@illinois.edu~\\~\\Technical Report}

\date{February 27, 2012}

\maketitle

\thispagestyle{empty}

\newpage

\section{Introduction}
\label{s_intro}

This report contains two related sets of results with different assumptions on synchrony. The first part is about iterative algorithms in synchronous systems. Following our previous work on synchronous iterative approximate Byzantine consensus (IABC) algorithms \cite{IBA_sync}, we provide a more intuitive {\em tight} necessary and sufficient condition for the existence of such algorithms in synchronous networks\footnote{With a slight abuse of terminology, we use ``systems'' and ``networks'' interchangeably in this report.}. We believe this condition and the results in \cite{IBA_sync} also hold in partially asynchronous algorithmic model introduced in \cite{AA_convergence_markov}.

In the second part of the report, we explore the problem in asynchronous networks. While the traditional Byzantine consensus is not solvable in asynchronous systems \cite{FLP_one_crash}, approximate Byzantine consensus can be solved using iterative algorithms \cite{AA_Dolev_1986}.

\section{Preliminaries}
\label{sec:models}

In this section, we present the network and failure models that are common to both parts.

\subsection{Network Model}

The network is modeled as a simple {\em directed} graph $G(\scriptv,\scripte)$, where $\scriptv=\{1,\dots,n\}$ is the set of $n$ nodes, and $\scripte$ is the set of directed edges between nodes in $\scriptv$. With a slight abuse of terminology, we use the terms ``edge'' and ``link'' interchangeably.
We assume that $n\geq \max(2,3f+1)$, since the consensus problem for $n=1$ is trivial.
If a directed edge $(i,j)\in \scripte$, then node $i$ can reliably transmit to node $j$.
For convenience,
we exclude self-loops from $\scripte$, although every node is allowed to send messages
to itself. We also assume that all edges are authenticated, such that when a node $j$ receives a message from node $i$ (on edge $(i,j)$), it can correctly determine that the message was sent by node $i$.
For each node $i$, let $N_i^-$ be the set of nodes from which $i$ has incoming
edges.
That is, $N_i^- = \{\, j ~|~ (j,i)\in \scripte\, \}$.
Similarly, define $N_i^+$ as the set of nodes to which node $i$
has outgoing edges. That is, $N_i^+ = \{\, j ~|~ (i,j)\in \scripte\, \}$.
By definition, $i\not\in N_i^-$ and $i\not\in N_i^+$.
However, we emphasize that each node can indeed send messages to itself.

\comment{======================== Some more explanation==================
Last, it is convenient to explicitly divide the state of each node $i$ into two parts, memory holding $i$'s current value and an incoming buffer for each incoming edge $(j, i)$, for all $j \in N_i^-$. As implicitly assumed in \cite{AA_Dolev_1986}, we discuss the algorithms that have infinite-size buffer and automatically order the incoming messages. In other words, we consider a system on top of a reliable underlying communication layer. The layer adds a header containing a sequence number to each message while transmitting, and order messages based on the number when receiving. Then the layer returns the oldest message to the algorithms. In others words, we focus on the problem due to unknown delay: the confusion between faulty nodes and slow fault-free nodes. 

Though we do not consider link failure (i.e., message loss) in this report and in our previous work \cite{IBA_sync}, we do notice that to some degree, asynchronous iterative Byzantine consensus without link failure behaves similarly to the synchronous iterative Byzantine consensus with bounded dynamic link failures\footnote{By bounded dynamic link failures, we mean the following: At any given point of time, there are some link failures in the system, and the failures may change from link to link over time. However, the number of link failures is bounded.}.
==================================}

\subsection{Failure Model}

We consider the Byzantine failure model, with up to $f$ nodes becoming faulty. A faulty node may misbehave arbitrarily. Possible misbehavior includes sending incorrect and mismatching messages to different neighbors. The faulty nodes may potentially collaborate with each other. Moreover, the faulty nodes are assumed to have a complete knowledge of the state of the other nodes in the system and a complete knowledge of specification of the algorithm.

\newpage

\vspace*{4in}

\begin{center}
\textit{Part I: Synchronous Networks}
\end{center}

\newpage

\section*{Synchronous Networks}

The network is assumed to be synchronous. This report provides a more intuitive condition that is equivalent to our original necessary and sufficient condition introduced in Theorem 1 of \cite{IBA_sync}. Note that the discussion in this part is not self-contained, and relies heavily on the material and notations in \cite{IBA_sync}.

\section{More Intuitive Necessary and Sufficient Condition}
\label{s_tight_2}

For completeness, we state the tight condition from our previous report \cite{IBA_sync} here again:

\begin{theorem}
\label{thm:nc}
Suppose that a correct IABC algorithm exists for $G(\scriptv,\scripte)$.
Let sets $F,L,C,R$ form a partition\footnote{Sets $X_1,X_2,X_3,...,X_p$
are said to form a partition of set $X$ provided that (i) $\cup_{1\leq i\leq p} X_i = X$,
and (ii) $X_i\cap X_j=\Phi$ when $i\neq j$.} of $\scriptv$, such that $L$ and $R$ are both
non-empty, and $F$ contains at most $f$ nodes.
Then, at least one of these two conditions must be true: (i) $C\cup R\Rightarrow L$, or (ii) $L\cup C\Rightarrow R$.\footnote{Note that the notion of ``$\Rightarrow$'' and ``$\ARightarrow$'' (will be introduced in asynchronous networks part) is similar to ``r-robust" graph presented in \cite{IBA_broadcast_Sundaram}.}
\end{theorem}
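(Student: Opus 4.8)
The plan is to prove the statement in its contrapositive form by an indistinguishability (adversary) argument: if both conditions fail, then no correct IABC algorithm can converge, contradicting the hypothesis. So I would begin by assuming, for contradiction, that a correct IABC algorithm exists and yet there is a partition $F,L,C,R$ of $\scriptv$ with $L,R$ non-empty, $|F|\le f$, and \emph{both} $C\cup R\not\Rightarrow L$ and $L\cup C\not\Rightarrow R$. Unfolding the definition of $\Rightarrow$ from \cite{IBA_sync}, the first failure means no node of $L$ has $f+1$ or more incoming neighbors in $C\cup R$, i.e.\ every node $i\in L$ has $|N_i^-\cap(C\cup R)|\le f$, and symmetrically every node of $R$ has at most $f$ incoming neighbors in $L\cup C$.

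Next I would exhibit a single execution that cannot converge. I designate the nodes of $F$ as the (at most $f$) faulty nodes, and assign initial states $0$ to every node in $L$, $1$ to every node in $R$, and arbitrary values in $[0,1]$ to the nodes in $C$. The faulty nodes then adopt the two-faced strategy permitted by the Byzantine model (mismatched messages to distinct neighbors): each $k\in F$ always reports the value $0$ to its neighbors in $L$ and the value $1$ to its neighbors in $R$.

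The heart of the proof is an induction on the iteration number $t$ showing that every node of $L$ stays at $0$ and every node of $R$ stays at $1$. Fix $i\in L$ and assume the claim at $t-1$. The values $i$ receives from $L$, from $F$, and from itself are all $0$, so the only received values that can exceed $0$ come from $i$'s neighbors in $C\cup R$, of which there are at most $f$. Invoking the structure a \emph{correct} IABC algorithm must have (the trim-and-average update and its row-stochastic transition-matrix representation established in \cite{IBA_sync}), node $i$ discards the $f$ largest received values and then takes a weighted average, including its own value; since every strictly positive received value lies among those at most $f$ values, all of them are trimmed, leaving only $0$'s, and $i$'s new state is again $0$. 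The argument for $R$ is symmetric, using $L\cup C\not\Rightarrow R$ and trimming the $f$ smallest values. Because $L$ and $R$ are non-empty sets of fault-free nodes frozen at $0$ and $1$ respectively, the fault-free states remain a fixed distance apart forever and never converge, contradicting the convergence guarantee of a correct algorithm.

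I expect the main obstacle to be making this freezing induction fully rigorous while relying only on the abstract guarantees a correct algorithm must satisfy, rather than on a specific update rule. Concretely, one must argue that the adversarial world is genuinely indistinguishable, to each node of $L$, from a legitimate world in which the at most $f$ ``high'' neighbors are the faulty ones, so that the algorithm's $f$-fault-tolerance forces the update to stay within the convex hull of the $0$-valued nodes; this is exactly where the support and stochasticity properties of the transition matrix from \cite{IBA_sync} must be quoted carefully. Some attention is also needed for degenerate cases, such as nodes with fewer than $2f$ incoming values or $C\cup R$ neighbors that happen to send $0$, but these only make the trimming argument easier and do not affect the conclusion.
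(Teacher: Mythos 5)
Your overall plan---assume both conditions fail, make $F$ faulty with a two-faced strategy (report $0$ toward $L$, $1$ toward $R$), and run a freezing induction that contradicts convergence---is exactly the approach the paper takes (it is how the asynchronous analogue, Theorem~\ref{thm:nc_a}, is proved here, and how Theorem~\ref{thm:nc} itself is proved in \cite{IBA_sync}). However, there is a genuine gap at the heart of your induction step: you justify ``node $i\in L$ stays at $0$'' by invoking ``the trim-and-average update and its row-stochastic transition-matrix representation established in \cite{IBA_sync}.'' No such structure may be assumed. Theorem~\ref{thm:nc} is a \emph{necessity} statement quantified over \emph{every} IABC algorithm: the update function $Z_i$ is an arbitrary function of $r_i[t]$ and $v_i[t-1]$, chosen by the algorithm designer. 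The trim-and-average rule and its stochastic-matrix analysis are properties of the one \emph{specific} algorithm used in \cite{IBA_sync} to prove \emph{sufficiency}; a hypothetical correct algorithm need not trim anything, so the claim that ``every strictly positive received value gets eliminated'' is simply not available to you. As written, your argument shows only that the trim-and-average algorithm fails on such graphs, not that every algorithm does---which is circular as a proof of necessity.

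The step you need instead rests on nothing but indistinguishability and the validity condition, and your closing paragraph gestures at it but then resolves it the wrong way (by again quoting the transition matrix). Fix $i\in L$; since $C\cup R\not\Rightarrow L$, node $i$ has at most $f$ incoming neighbors in $C\cup R$. Construct a hypothetical execution in which exactly those neighbors are the faulty nodes (sending $i$ precisely the values they send in the real execution), while every other node---including the nodes of $F$ and all nodes of $C\cup R$ that are not in-neighbors of $i$---is fault-free with input $0$. This is a legitimate execution with at most $f$ faults, and node $i$'s view is identical in the two executions (here it matters that your adversary sends exactly $0$ to $L$). In the hypothetical execution $\mu[0]=U[0]=0$, so validity forces $v_i[1]=0$ there, hence also in the real execution; symmetrically $v_j[1]=1$ for $j\in R$. (The paper's own variant plays two hypothetical scenarios against each other, with the faulty nodes sending $m^-<m$ to $L$ and $M^+>M$ to $R$, and intersects the constraints $v_i[1]\geq m$ and $v_i[1]\leq m$; either variant works, but both rely solely on validity and never on the form of $Z_i$.) With this replacement, your freezing induction and the concluding violation of convergence go through as you describe.
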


This condition is not very intuitive.
In Theorem~\ref{thm:nc2} below, we state another tight necessary and sufficient condition that is equivalent
to the necessary condition in Theorem~\ref{thm:nc}, and is somewhat
easier to interpret. To facilitate the
statement of Theorem~\ref{thm:nc2}, we now introduce the notions of ``source
component'' and ``reduced graph'' using the following three definitions.

\begin{definition}
\label{def:decompose}
{\bf Graph decomposition:}
Let $H$ be a directed graph. Partition graph $H$ into strongly connected components,
$H_1,H_2,\cdots,H_h$, where $h$ is a non-zero integer dependent on graph $H$,
such that
\begin{itemize}
\item every pair of nodes {\bf within} the same strongly connected component has directed
paths in $H$ to each other, and
\item for each pair of nodes, say $i$ and $j$, that belong to
two {\bf different} strongly connected components, either $i$ does not have a
directed path to $j$ in $H$, or $j$ does not have a directed path to $i$ in $H$.
\end{itemize}
Construct a graph $H^d$ wherein each strongly connected component $H_k$ above is represented
by vertex $c_k$, and there is an edge from vertex $c_k$ to vertex $c_l$ only if
the nodes in $H_k$ have directed paths in $H$ to the nodes in $H_l$.
\end{definition}

~

It is known that the decomposition
graph $H^d$ is a directed {\em acyclic} graph \cite{dag_decomposition}.

~

\begin{definition}
{\bf Source component}:
Let $H$ be a directed graph, and let $H^d$ be its decomposition as per
Definition~\ref{def:decompose}. 
Strongly connected component $H_k$ of $H$ is said to be a {\em source component}
if the corresponding vertex $c_k$ in $H^d$ is \underline{not} reachable from any
other vertex in $H^d$. 
\end{definition}

\begin{definition}
\label{def:reduced} {\bf Reduced Graph:}
For a given graph $G(\scriptv,\scripte)$ and $F\subset\scriptv$,
a graph $G_F(\scriptv_F,\scripte_F)$
is said to be a {\em reduced graph}, if: (i)
$\scriptv_F=\scriptv-F$, and (ii)
$\scripte_F$ is obtained by first removing from $\scripte$ all the links
incident on the nodes in $F$, and {\em then} removing up to $f$ other incoming
links at each node in $\scriptv_F$.
\end{definition}

~

Note that for a given $G(\scriptv,\scripte)$ and a given $F$,
multiple reduced graphs $G_F$ may exist.

~

\begin{theorem}
\label{thm:nc2}
Suppose that Theorem \ref{thm:nc} holds for graph $G(\scriptv,\scripte)$.
Then, for any $F \subset \scriptv$ such that $|F|<|\scriptv|$ and $|F| \leq f$, every
reduced graph $G_F$ obtained as per Definition \ref{def:reduced} 
must contain exactly one {\em source component}.
\end{theorem}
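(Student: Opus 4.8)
The plan is to establish existence and uniqueness of the source component separately, with the real work lying in uniqueness. Existence is immediate: by Definition~\ref{def:decompose} and the cited fact, the decomposition $(G_F)^d$ of any reduced graph is a directed acyclic graph, and every finite DAG has at least one vertex with no incoming edges; the strongly connected component corresponding to such a vertex is, by definition, a source component. Hence every reduced graph $G_F$ contains \emph{at least} one source component, and it remains only to rule out two or more.

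I would prove uniqueness by contradiction against Theorem~\ref{thm:nc}. Recall that for disjoint nonempty sets $A,B$, the relation $A\Rightarrow B$ means some node $i\in B$ satisfies $|N_i^-\cap A|\geq f+1$; thus $A\not\Rightarrow B$ asserts that every node of $B$ has at most $f$ incoming neighbors in $A$. Suppose some reduced graph $G_F$ contains two distinct source components, with vertex sets $S_1$ and $S_2$. I would set $L=S_1$, $R=S_2$, and $C=\scriptv_F-S_1-S_2$, keeping $F$ as the given removed set. Since $S_1,S_2$ are nonempty and disjoint (distinct strongly connected components) and $\scriptv_F=\scriptv-F$, the four sets $F,L,C,R$ form a partition of $\scriptv$ with $L,R$ nonempty and $|F|\leq f$, so Theorem~\ref{thm:nc} applies and forces at least one of $C\cup R\Rightarrow L$ or $L\cup C\Rightarrow R$.

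The key step is a counting lemma that contradicts both alternatives. Because $S_1$ is a source component of $G_F$, no edge of $\scripte_F$ enters $S_1$ from outside $S_1$; equivalently, for each $v\in S_1$, every edge of $G$ from $\scriptv-F-S_1$ into $v$ was deleted while constructing $G_F$. The edges originating in $F$ are deleted in the first stage of Definition~\ref{def:reduced}, but edges originating in $\scriptv-F-S_1=C\cup R$ survive that stage and can only be removed in the second stage, which deletes at most $f$ incoming edges per node. Hence every $v\in S_1=L$ has at most $f$ incoming neighbors in $C\cup R$, which is exactly $C\cup R\not\Rightarrow L$. Applying the identical argument to the source component $S_2=R$, whose outside neighborhood is $\scriptv-F-S_2=L\cup C$, yields $L\cup C\not\Rightarrow R$. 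Both alternatives of Theorem~\ref{thm:nc} therefore fail, giving the contradiction, so $G_F$ has exactly one source component.

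The point demanding care, and the likely main obstacle, is the bookkeeping in the counting lemma: I must verify that the entire set of ``missing'' incoming edges at a node of the source component originating in $\scriptv_F-S_1$ is charged against the per-node budget of $f$ from the second stage. This relies on the fact that the first stage of Definition~\ref{def:reduced} removes only edges incident on $F$ and thus leaves all edges internal to $\scriptv_F$ intact, so the full deficit must be absorbed by the at-most-$f$ deletions of the second stage. Once this charging argument is pinned down, the contradiction with Theorem~\ref{thm:nc} is immediate and the theorem follows.
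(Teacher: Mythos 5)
Your proposal is correct and follows essentially the same route as the paper's own proof: existence via the acyclicity of the decomposition, and uniqueness by taking two alleged source components as $L$ and $R$, forming the partition $F,L,C,R$, and using the fact that edges internal to $\scriptv_F$ can only be removed by the second (at most $f$ per node) stage of Definition~\ref{def:reduced} to conclude $C\cup R\not\Rightarrow L$ and $L\cup C\not\Rightarrow R$, contradicting Theorem~\ref{thm:nc}. Your explicit charging argument is exactly the step the paper compresses into ``together with the manner in which $\scripte_F$ is defined,'' so there is no gap.
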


\begin{proof}
Since $|F|<|\scriptv|$, 
$G_F$ contains at least one node;
therefore, at least one
source component must exist in $G_F$. We now prove that $G_F$ cannot
contain more than one source component. The proof is by contradiction.
Suppose that there exists a set $F\subset \scriptv$ with
$|F|<|\scriptv|$ and $|F|\leq f$,
and a reduced graph 
$G_F(\scriptv_F,\scripte_F)$ corresponding to $F$, such
that the decomposition of $G_F$ includes at least two source components.

Let the sets of nodes in two such source components of $G_F$
be denoted $L$ and $R$, respectively. Let $C=\scriptv-F-L-R$.
Observe that $F,L,C,R$ form a partition of the nodes in $\scriptv$.
Since $L$ is a source component in $G_F$ it follows that
there are no directed links in $\scripte_F$ from any node in
$C\cup R$ to the nodes in $L$.
Similarly, since $R$ is a source component in $G_F$ it follows that
there are no directed links in $\scripte_F$ from any node in $L\cup C$ to
the nodes in $R$.
These observations, together with the manner in which $\scripte_F$
is defined, imply that (i) there are at most $f$ links in $\scripte$ from
the nodes in $C\cup R$ to each node in $L$, and
(ii) there are at most $f$ links in $\scripte$ from
the nodes in $L\cup C$ to each node in $R$.
Therefore, in graph $G(\scriptv,\scripte)$, $C\cup R\not\Rightarrow L$
and $L\cup C\not\Rightarrow R$, violating Theorem~\ref{thm:nc}.
Thus, we have proved that $G_F$ must contain exactly one source component. 
\end{proof}

~

The above proof shows that Theorem \ref{thm:nc} implies Theorem \ref{thm:nc2}. Now, we prove that Theorem \ref{thm:nc2} implies Theorem \ref{thm:nc}.

\begin{proof}
Suppose that the condition stated in
Theorem \ref{thm:nc} does not hold for $G(\scriptv,\scripte)$. Thus, there exists a partition $F, L, C, R$ of $\scriptv$ such that $|F|\leq f$, $L$ and $R$ are non-empty,
and $C \cup R \not\Rightarrow L$ and $L \cup C \not\Rightarrow R$. 

We now construct a reduced graph $G_F(\scriptv_F,\scripte_F)$ corresponding to set $F$.
First, remove all nodes in $F$ from $\scriptv$ to obtain $\scriptv_F$.
Remove all the edges incident on $F$ from $\scripte$.
Then because $C \cup R \not\Rightarrow L$, the number of incoming
edges at each node in $L$ from  the nodes in $C\cup R$ is at most
$f$; remove all these edges.
Similarly, for every node $j \in R$, remove all incoming edges from $L \cup C$
(there are at most $f$ such edges at each node $j\in R$).
The resulting graph $G_F$ is a reduced graph that satisfies the conditions
in Definition \ref{def:reduced}.

In $\scripte_F$, there are no incoming edges to nodes
in $R$ from the nodes $L\cup C$; similarly, in $\scripte_F$, there
are no incoming edges to nodes $L$ from the nodes in $C\cup R$. 
It follows that no single node in $\scriptv_F$ has paths in $G_F$
(i.e., paths consisting of links in $\scripte_F$)
to all the other nodes in $\scriptv_F$. Thus, $G_F$ must contain more than one source component. Thus,
Theorem \ref{thm:nc2} does not hold for $G(\scriptv,\scripte)$.
\end{proof}

By two results above, it follows that Theorems \ref{thm:nc} and \ref{thm:nc2} specify
equivalent conditions.\footnote{An alternate interpretation of the condition
in Theorem \ref{thm:nc2} is that in graph $G_F$ non-fault-tolerant
iterative consensus must be possible.} 

Next, we present a weaker necessary conditions derived from Theorem \ref{thm:nc2} that implies the property of the source component.

~

\begin{corollary}
Suppose that Theorem \ref{thm:nc} holds for graph $G(\scriptv,\scripte)$.
Then, for any $F \subset \scriptv$ such that $|F| \leq f$, the unique
source component in every
reduced graph $G_F$ 
must contain at least $f+1$ nodes.
\end{corollary}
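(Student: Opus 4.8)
The plan is to argue by contradiction, applying the necessary condition of Theorem~\ref{thm:nc} directly to a carefully chosen partition. Let $S$ denote the unique source component of a given reduced graph $G_F$ (uniqueness is guaranteed by Theorem~\ref{thm:nc2}), and suppose for contradiction that $|S| \le f$. I would then exhibit a partition $F, L, C, R$ of $\scriptv$ for which \emph{both} alternatives of Theorem~\ref{thm:nc} fail, contradicting the hypothesis that Theorem~\ref{thm:nc} holds for $G$, and thereby forcing $|S| \ge f+1$.

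The partition I would use keeps the same fault set $F$ and takes $L = S$, $C = \emptyset$, and $R = \scriptv_F - S$. First I would verify this satisfies the hypotheses of Theorem~\ref{thm:nc}: the set $F$ still has at most $f$ nodes; $L = S$ is nonempty because a source component is a nonempty strongly connected component; and $R$ is nonempty because if instead $\scriptv_F = S$ then $n = |F| + |S| \le 2f$, contradicting the standing assumption $n \ge 3f+1$. Hence $L$ and $R$ are both nonempty and $F, L, C, R$ genuinely partition $\scriptv$.

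Next I would establish the two negative statements. For $C \cup R \not\Rightarrow L$: since $S$ is a source component of $G_F$, no edge of $\scripte_F$ enters $S$ from $\scriptv_F - S = R$; as every edge from $R$ into $L$ is incident only on nodes of $\scriptv_F$ (not on $F$), each such $\scripte$-edge that is missing from $\scripte_F$ must have been discarded during the ``remove up to $f$ incoming links'' step of Definition~\ref{def:reduced}, so each node of $L$ has at most $f$ incoming edges in $\scripte$ from $C \cup R = R$. For $L \cup C \not\Rightarrow R$: the set $L \cup C = S$ has only $|S| \le f$ nodes, so in the simple graph $G$ each node of $R$ receives at most $|S| \le f$ incoming edges from $L \cup C$. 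Thus neither condition of Theorem~\ref{thm:nc} holds, the required contradiction follows, and $|S| \ge f+1$.

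The main obstacle is really the only delicate point: the translation between the combinatorial notion of a source component of $G_F$ and the edge-counting relation $\not\Rightarrow$ in the original graph $G$. One must argue that the \emph{absence} of $\scripte_F$-edges into $S$ can be explained solely by the bounded per-node edge removal, which is what yields the ``$\le f$'' bound in $G$. The nonemptiness of $R$, where the assumption $n \ge 3f+1$ enters, is the secondary point that should not be overlooked; the degenerate case $f=0$ is trivial, since every source component already contains at least one node.
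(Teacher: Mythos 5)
Your proposal is correct and is essentially the paper's own argument: the same partition $L = S$, $C = \Phi$, $R = \scriptv - F - S$, with $|S| \le f$ ruling out $L \cup C \Rightarrow R$, and the per-node removal of at most $f$ incoming links in Definition~\ref{def:reduced} translating the source-component property into the ``at most $f$ incoming $\scripte$-edges from $R$'' bound. The only differences are presentational---the paper lets Theorem~\ref{thm:nc} force $C \cup R \Rightarrow L$ and then contradicts the source property, whereas you refute both alternatives and contradict the theorem directly---plus your explicit (and welcome) check that $R \neq \Phi$ and the $f=0$ case, which the paper leaves implicit.
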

\begin{proof}
The proof is by contradiction.
Suppose that there exists a set $F$ with $|F|\leq f$,
and a corresponding reduced graph
$G_F(\scriptv_F,\scripte_F)$, such that
the decomposition of 
$G_F$ contains a unique source component
consisting of at most $f$ nodes.
Define $L$ to be the set of nodes in this unique source
component. Also define $C=\Phi$ and $R=\scriptv-L-F-C$.
Observe that $F,L,C,R$ form a partition of $\scriptv$.

Since $|L \cup C| = |L| \leq f$, it follows that in graph $G(\scriptv,\scripte)$,
$L \cup C \not\Rightarrow R$,
Then Theorem~\ref{thm:nc} implies that, in graph $G(\scriptv,\scripte)$,
$C\cup R \Rightarrow L$. That is, since $C=\Phi$,
$R\Rightarrow L$, and there must be a node in $L$, say node $i$,
that has at least $f+1$ links in $\scripte$ from the nodes in $R$. 
Since $i\in L$, it follows that $i\not\in F$ (by definition of $\Rightarrow$). Also, since
$i$ has at least $f+1$ incoming edges in $\scripte$ from nodes in $R$, 
it follows that in $\scripte_F$, node $i$ must have at least one incoming edge
from the nodes in $R$. This contradicts that assumption that set
$L$ containing node $i$ is a source component of $G_F$.
\end{proof}

~

Note that this Corollary implies that for the correctness of IABC on the graph, the graph must have a component that acts as a source with at least $f+1$ nodes and thus outnumbers the faulty nodes.

For a ``local'' fault model under the constraint that fault nodes send
{\em identical} messages to their outgoing neighbors, Zhang and Sundaram \cite{IBA_broadcast_Sundaram} showed {\em sufficiency} of a graph property similar to the condition above, although they do not
prove that the sufficient condition is also necessary. Also, our fault model does
not impose the above constraint on the faulty nodes.

\section{Partially Asynchronous Algorithmic Model}

\cite{AA_convergence_markov} (Chapter 7) presents a {\em Partially Asynchronous
Algorithmic Model}, in which
an iterative algorithm analogous to Algorithm 1 \cite{IBA_sync} is used to solve iterative consensus with zero faults, with the following
modifications:

\begin{itemize}
\item Each node may not necessarily update its state in each iteration. However, each node updates its state at least once in each set of consecutive $B$
iterations, where $B$ is a finite positive integer constant and is known to all nodes in advance.

\item If node $i$ updates its state in iteration $t$, due to message
delays, node $i$
may not necessarily be aware of the most recent state (i.e., at the
end of the previous iteration) of its incoming neighbors. However,
node $i$ will know the state of each incoming neighbor at the end of
at least one of the $B$ previous iterations\footnote{If node $i$ does not receive new values from some incoming neighbor $j$ in the past $B$ consecutive iterations, then by the model definition, node $i$ knows $j$ is faulty.}; the most recent state
known is used in performing state update at node $i$.

\end{itemize}

We believe that the necessary and sufficient
conditions for the IABC algorithm under partially asynchronous algorithmic model are identical to the necessary and
sufficient conditions presented above and in \cite{IBA_sync} for the synchronous model. We expect that the proof is similar to the proof presented in \cite{IBA_sync}.

\newpage

\vspace*{4in}

\begin{center}
\textit{Part II: Asynchronous Networks}
\end{center}

\newpage

\section*{Asynchronous Networks}

In this part, we consider the iterative consensus problem in asynchronous networks. We will follow the definition of asynchronous system used in \cite{AA_Dolev_1986}. Each node operates at a completely arbitrary rate. Furthermore, the link between any pair of nodes suffers from an arbitrary but finite network delay\footnote{The delay can also be variable.} and out-of-order delivery. 

Now, we introduce the class of algorithms that we will explore in this report.

\section{Asynchronous Iterative Approximate Byzantine Consensus}

\paragraph{Algorithm Structure}

By the definition of asynchronous systems, each node proceeds at different rate. Thus, Dolev et al. developed an algorithm based on ``rounds" such that nodes update once in each round \cite{AA_Dolev_1986}. In particular, we consider the structure of {\em Async-IABC Algorithm} below, which has the same structure as the algorithm in \cite{AA_Dolev_1986}. This algorithm structure differs 
from the one for synchronous systems in \cite{IBA_sync} in two important ways:
(i) the messages containing states are now tagged by the
round index to which the states correspond, and 
(ii) each node $i$ waits to receive only $|N_i^-|-f$ messages containing
states from round $t-1$ before computing the new state in
round $t$.

Due to the asynchronous nature of the system, different nodes may
potentially perform their $t$-th round at very different real times. Thus, the main difference between iteration and round is as following:

\begin{itemize}
\item Iteration is defined as fixed amount of real-time units. Hence, every node will be in the same iteration at any given real time.
\item Round is defined as the time that each node updates its value\footnote{With a slight abuse of terminology, we will use ``value'' and ``state'' interchangeably in this report.}. Hence, every node may be in totally different rounds at any given real time in asynchronous systems.
\end{itemize}

In Async-IABC algorithm, each node $i$ maintains state $v_i$, with $v_i[t]$ denoting the state of node $i$ at the end of its $t$-th round. Initial state of node $i$, $v_i[0]$, is equal to the initial input provided to node $i$. At the start of the $t$-th round ($t > 0$), the state of node $i$ is $v_i[t-1]$. Now, we describe the steps that should be performed by each node $i\in \scriptv$ in its $t$-th round. 

\vspace*{8pt}\hrule
{\bf Async-IABC Algorithm}
\vspace*{6pt}\hrule

\begin{enumerate}
\item {\em Transmit step:} Transmit current state $v_i[t-1]$ on all outgoing edges. The message
is tagged by index $t-1$.
\item {\em Receive step:} Wait until the first $|N_i^-|-f$ messages tagged by index $t-1$ are received
on the incoming edges (breaking ties arbitrarily). Values received in these messages form
vector $r_i[t]$ of size $|N_i^-|-f$.

\item {\em Update step:} Node $i$ updates its state using a transition function $Z_i$.

 $Z_i$ is a part of the specification of the algorithm, and takes
 as input the vector $r_i[t]$ and state $v_i[t-1]$.

\begin{eqnarray}
v_i[t] & = &  Z_i ~( ~r_i[t]\,,\,v_i[t-1] ~)
\label{eq:Z_i_async}
\end{eqnarray}

\end{enumerate}
\hrule
\vspace*{8pt}

We now define $U[t]$ and $\mu[t]$, assuming that $\scriptf$
is the set of Byzantine faulty nodes, with the nodes
in $\scriptv-\scriptf$ being non-faulty.\footnote{\normalsize For sets $X$ and $Y$, $X-Y$ contains elements that are in $X$ but not in $Y$. That is, $X-Y=\{i~|~ i\in X,~i\not\in Y\}$.} 
\begin{itemize}

\item $U[t] = \max_{i\in\scriptv-\scriptf}\,v_i[t]$. $U[t]$ is the largest state among the fault-free nodes at the end of the $t$-th round.
Since the initial state of each node is equal to its input,
$U[0]$ is equal to the maximum value of the initial input at the fault-free nodes.

\item $\mu[t] = \min_{i\in\scriptv-\scriptf}\,v_i[t]$. $\mu[t]$ is the smallest state among the fault-free nodes at the end of the $t$-th round.
$\mu[0]$ is equal to the minimum value of the initial input at the
fault-free nodes.
\end{itemize}

The following conditions must be satisfied by an Async-IABC algorithm
in the presence of up to $f$ Byzantine faulty nodes:
\begin{itemize}
\item {\em Validity:} $\forall t>0,
~~\mu[t]\ge \mu[t-1]
~\mbox{~~and~~}~
~U[t]\le U[t-1]$

\item {\em Convergence:} $\lim_{\,t\rightarrow\infty} ~ U[t]-\mu[t] = 0$
\end{itemize}

The objective in this report is to identify the necessary and sufficient
conditions for the existence of a {\em correct} Async-IABC algorithm (i.e.,
satisfying the above validity and convergence conditions)
for a given $G(\scriptv,\scripte)$ in any asynchronous system.

\subsection{Notations}

There are many notations used and will be introduced later in this part of the report. Here is a quick reference:

\begin{itemize}
\item $N_i^+, N_i^-$: set of outgoing neighbors and incoming neighbors of some node $i$, respectively.
\item $U[t], \mu[t]$: maximum value and minimum value of all the fault-free nodes at the end of round $t$, respectively.
\item $Z_i$: a function specifying how node $i$ updates its new value (algorithm specification).
\item $N_i^@[t]$: set of incoming neighbors from whom node $i$ actually received values at round $t \geq 1$.
\item $r_i[t]$: set of values sent by $N_i^@[t]$.
\item $N_i^*[t]$: set of incoming neighbors from whom node $i$ actually used the values to update at round $t \geq 1$.
\end{itemize}

Note that by definition we have the following relationships: $N_i^*[t] \subset N_i^@[t] \subset N_i^-$. Moreover, $N_i^*[t]$ and $N_i^@[t]$ may change over the rounds, and $N_i^-$ is a constant. Lastly, $|N_i^@[t]| = |N_i^-| - 2f$ and $|N_i^*[t]| = |N_i^@[t]| - f$ for any round $t \geq 1$.

\section{Necessary Condition}
\label{s_necessary}

In asynchronous systems, for an Async-IABC algorithm satisfying the
 the {\em validity} 
 and {\em convergence} conditions
 to exist, the underlying graph $G(\scriptv,\scripte)$
must satisfy a necessary condition proved in this section.
We now define relations $\ARightarrow$
and $\not\ARightarrow$ that are used frequently in our proofs. Note that these definitions are analogous to the definitions of $\Rightarrow$ and $\not\Rightarrow$ in \cite{IBA_sync}.

\begin{definition}
\label{def:absorb}
For non-empty disjoint sets of nodes $A$ and $B$,

\begin{itemize}
\item $A \ARightarrow B$ iff there exists a node $v\in B$ that has at least $2f+1$ incoming
links from nodes in $A$, i.e., $|N_v^-\cap A|>2f$.

\item $A\not\ARightarrow B$ iff $A\ARightarrow B$ is {\em not} true.

\end{itemize}

\end{definition}

\dividerline

Now, we present the necessary condition for correctness of Async-IABC in asynchronous systems. Note that it is similar to that for synchronous systems \cite{IBA_sync}, but with $\Rightarrow$ replaced by $\ARightarrow$.

\begin{theorem}
\label{thm:nc_a}
Let sets $F,L,C,R$ form a partition of $\scriptv$, such that
\begin{itemize}
\item $0 \leq |F|\le f$,
\item $0<|L|$, and
\item $0<|R|$
\end{itemize}
Then, at least one of the two conditions below must be true.
\begin{itemize}
\item $C\cup R\ARightarrow L$
\item $L\cup C\ARightarrow R$
\end{itemize}
\end{theorem}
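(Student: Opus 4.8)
The plan is to prove the contrapositive by an explicit adversarial construction. Assuming a correct Async-IABC algorithm exists, I will show that if both $C\cup R\not\ARightarrow L$ and $L\cup C\not\ARightarrow R$ held for some admissible partition, then one could build a single legal asynchronous execution in which $U[t]-\mu[t]$ never shrinks, contradicting \emph{convergence}. I would designate the nodes of $F$ as the (at most $f$) Byzantine nodes, and fix the inputs so that every node in $L$ starts at $0$, every node in $R$ starts at $1$, and the nodes in $C$ start at arbitrary values in $[0,1]$. Since $L$ and $R$ are nonempty subsets of the fault-free set $\scriptv-F$, this gives $\mu[0]=0$ and $U[0]=1$, and the goal becomes to keep every node of $L$ pinned at $0$ and every node of $R$ pinned at $1$ forever.

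The adversary has two tools: the Byzantine behaviour of $F$ and the message schedule. Each faulty node will \emph{equivocate}, reporting the value $0$ to its outgoing neighbours in $L$ and the value $1$ to those in $R$. The scheduler will delay every message sent ``across the cut'': to a node $v\in L$ it first delivers all messages originating in $L\cup F$ (all carrying value $0$), and only afterwards, if $v$ still has not met its quota, messages from $C\cup R$. The central counting step uses $C\cup R\not\ARightarrow L$: by Definition~\ref{def:absorb} every $v\in L$ satisfies $|N_v^-\cap(C\cup R)|\le 2f$, so at least $|N_v^-|-2f$ of its incoming edges come from $L\cup F$. Hence when $v$ collects the first $|N_v^-|-f$ messages in this order, it is forced to take at most $(|N_v^-|-f)-(|N_v^-|-2f)=f$ messages from $C\cup R$; every other value it uses is $0$. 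A symmetric statement holds for each $w\in R$ via $L\cup C\not\ARightarrow R$. Messages delayed past a node's quota still arrive in finite time (they merely carry a round index the recipient has already finished), so the schedule is a legitimate asynchronous execution and every round terminates.

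The heart of the argument, and the step I expect to be the main obstacle, is converting ``receives at most $f$ high values'' into ``does not move'' for an \emph{arbitrary} correct transition function $Z_v$, rather than a specific trimming rule. For this I would establish a boundedness lemma forced by \emph{validity}: in any round, $v_i[t]$ cannot exceed the $(f{+}1)$-th largest value among $r_i[t]\cup\{v_i[t-1]\}$, and symmetrically cannot fall below the $(f{+}1)$-th smallest. The proof of the lemma is an indistinguishability argument: if $Z_v$ ever returned a value strictly above the $(f{+}1)$-th largest input, one could recolour the $f$ senders of the largest received values as Byzantine liars and arrange a companion execution in which every genuinely fault-free value (including $v$'s own previous state) equals that $(f{+}1)$-th largest quantity, so that $U[t-1]$ equals it. Node $v$, being fault-free and receiving an identical multiset, would output the same value and thereby push $U[t]$ strictly above $U[t-1]$, violating validity. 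Getting the bookkeeping of this recolouring exactly right, so that the two executions are truly indistinguishable to $v$, respect the partition sizes, and correspond to a reachable configuration, is the delicate part.

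Granting the lemma, the conclusion is immediate. For $v\in L$ the received multiset consists of at least $|N_v^-|-2f$ zeros together with at most $f$ values from $C\cup R$ lying in $[0,1]$; at most $f$ of its entries are positive, so its $(f{+}1)$-th largest entry is $0$ and the lemma gives $v_i[t]\le 0$, while validity gives $\mu[t]\ge\mu[0]=0\le v_i[t]$, forcing $v_i[t]=0$. The mirror-image computation, using $L\cup C\not\ARightarrow R$ and the lower boundedness, pins each $w\in R$ at $1$. Therefore $\mu[t]=0$ and $U[t]=1$ for every $t$, so $U[t]-\mu[t]=1\not\to 0$, contradicting convergence and proving that at least one of $C\cup R\ARightarrow L$ or $L\cup C\ARightarrow R$ must hold.
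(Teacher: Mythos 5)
Your proposal is correct and follows essentially the same route as the paper's proof: the same adversarial inputs ($L$ at the minimum, $R$ at the maximum, $C$ in between), the same equivocating behaviour of $F$ across the cut, the same scheduling trick that uses $C\cup R\not\ARightarrow L$ (resp.\ $L\cup C\not\ARightarrow R$) to ensure each node of $L$ (resp.\ $R$) uses at most $f$ cross-cut values per round, and the same recoloring/indistinguishability argument combined with validity, followed by induction over rounds, to pin $L$ and $R$ at their initial values forever. The only notable difference is packaging: the paper has the faulty nodes send values $m^-<m$ to $L$ and $M^+>M$ to $R$ (strictly outside the input range), which lets it conclude $v_i[1]=m$ by a short four-case indistinguishability analysis instead of your general boundedness lemma, thereby sidestepping the ``delicate bookkeeping'' you flag (both of the paper's scenarios are completions of one and the same view, no separate companion execution with re-chosen inputs is needed).
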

\begin{proof}
The proof is by contradiction.
Let us assume that a correct Async-IABC consensus algorithm exists,
and $C\cup R\not\ARightarrow L$ and $L\cup C\not\ARightarrow R$.
Thus, for any $i\in L$, $|N_i^-\cap (C\cup R)|<2f+1$,
and for any
$j\in R$, $|N_j^-\cap (L\cup C)|<2f+1$,

Also assume that the nodes in $F$ (if $F$ is non-empty) are all faulty,
and the remaining nodes, in sets $L,R,C$, are fault-free. Note that the fault-free nodes
are not necessarily aware of the identity of the faulty nodes.

Consider the case when (i) each node in $L$ has input $m$, (ii) each
node in $R$ has input $M$, such that $M>m$,
and (iii) each node in $C$, if $C$ is non-empty,
has an input in the range
$[m,M]$.

At the start of round 1, suppose that the faulty nodes in $F$ (if non-empty)
send $m^- < m$ to outgoing neighbors in $L$, send $M^+ > M$ to outgoing neighbors in $R$, and
send some arbitrary value in $[m,M]$ to outgoing neighbors in $C$ (if $C$ is
non-empty).
This behavior is possible since nodes in $F$ are faulty.
Note that $m^-<m<M<M^+$.
Each fault-free node $k\in\scriptv-\scriptf$, sends to nodes
in $N_k^+$ value $v_k[0]$ in round 1.

Consider any node $i \in L$. Denote $N_i' = N_i^- \cap (C\cup R)$.
Since $C\cup R\not\ARightarrow L$, $|N_i'|\leq 2f$.
Consider the situation where the delay between certain $w = \min(f, |N_i'|)$ nodes in $N_i'$ and node $i$ is arbitrarily large compared to all the other traffic (including messages from incoming neighbors in $F$). Consequently, $r_i[1]$ includes $|N_i'| - w \leq f$ values from $N_i'$, since $w$ messages from $N_i'$ are delayed and thus ignored by node $i$. Recall that $N_i^@[1]$ is the set of nodes whose round $1$ values are received by node $i$ in time (i.e., before $i$ finishes step 2 in Async-IABC). By the argument above, $N_i^@[1] \cap N_i' \leq f$. 

Node $i$ receives $m^-$ from the nodes in $F \cap N_i^@[1]$, values in $[m,M]$ from the nodes in $N_i' \cap N_i^@[1]$, and $m$ from the nodes in $\{i\}\cup (L\cap N_i^@[1])$. 

Consider four cases:
\begin{itemize}
\item $F \cap N_i^@[1]$ and $N_i' \cap N_i^@[1]$ are both empty:
In this case, all the values that $i$ receives are from nodes in $\{i\}\cup( L\cap N_i^@[1])$,
and are identical to $m$. By validity condition, node $i$ must set its
new state, $v_i[1]$, to be $m$ as well.

\item $F \cap N_i^@[1]$ is empty and $N_i' \cap N_i^@[1]$ is non-empty:
In this case, since $|N_i' \cap N_i^@[1]|\leq f$, from $i$'s perspective,
it is possible that all the nodes in
$N_i^@[1] \cap N_i'$ are faulty, and the rest of the nodes are fault-free. In this
situation, the values sent to node $i$ by the fault-free nodes (which are
all in $\{i\}\cup (L\cap N_i^@[1]))$ are all $m$, and therefore, $v_i[1]$ must be set to $m$
as per the validity condition.

\item $F \cap N_i^@[1]$ is non-empty and $N_i' \cap N_i^@[1]$ is empty:
In this case, since $|F \cap N_i^@[1] |\leq f$, it is possible that all the nodes in $F \cap N_i^@[1] $ are faulty,
and the rest of the nodes are fault-free. In this
situation, the values sent to node $i$ by the fault-free nodes (which are
all in $\{i\}\cup (L\cap N_i^@[1]))$ are all $m$, and therefore, $v_i[1]$ must be set to $m$
as per the validity condition.

\item Both $F \cap N_i^@[1]$ and $N_i' \cap N_i^@[1]$ are non-empty:
From node $i$'s perspective, consider two possible scenarios:
(a) nodes in $F \cap N_i^@[1]$ are faulty, and the other
nodes are fault-free, and (b) nodes in $N_i' \cap N_i^@[1]$ are faulty, and the
other nodes are fault-free.

In scenario (a), from node $i$'s perspective, the non-faulty nodes have values
in $[m,M]$ whereas the faulty nodes have value $m^-$. According to the validity
condition, $v_i[1] \geq m$. On the other hand, in scenario (b), the
non-faulty nodes have values $m^-$ and $m$, where $m^-<m$; so $v_i[1] \leq m$, according to
the validity condition. Since node $i$ does not know whether the
correct scenario is (a) or (b), it must update its state to satisfy the
validity condition in both cases. Thus, it follows that $v_i[1] = m$.
\end{itemize}
Observe that in each case above $v_i[1]=m$ for each node $i\in L$.
Similarly, we can show that $v_j[1] = M$ for each node $j \in R$.

Now consider the nodes in set $C$, if $C$ is non-empty.
All the values received by the nodes in $C$ are in $[m,M]$, therefore,
their new state must also remain in $[m,M]$, as per the validity condition.

The above discussion implies that, at the end of the first iteration,
the following conditions hold true: (i) state of each node in $L$ is
$m$, (ii) state of each node in $R$ is $M$, and (iii) state of each node
in $C$ is in $[m,M]$. These conditions are identical to the initial conditions
listed previously. Then, by induction, it follows that for
any $t \geq 0$, $v_i[t] = m, \forall i \in L$, and $v_j[t] = M, \forall j \in R$.
Since $L$ and $R$ contain fault-free nodes, the convergence requirement
is not satisfied. This is a contradiction to the assumption that a correct
Async-IABC algorithm exists.
\end{proof}

\begin{corollary}
\label{cor:nc2}
Let $\{F,L,R\}$ be a partition of $\scriptv$, such that $0\leq |F|\le f$, and
$L$ and $R$ are non-empty. Then, either $L\ARightarrow R$ or $R\ARightarrow L$.
\end{corollary}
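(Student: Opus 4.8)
The plan is to derive the corollary directly from Theorem~\ref{thm:nc_a} by specializing to the case of an empty ``middle'' set. Given the partition $\{F, L, R\}$ of $\scriptv$, I would set $C = \Phi$ and regard the four-way collection $\{F, L, C, R\}$ as a partition of $\scriptv$. The first thing to check is that this meets the hypotheses of Theorem~\ref{thm:nc_a}: we need $0 \le |F| \le f$, $0 < |L|$, and $0 < |R|$, which are precisely the assumptions of the corollary. Crucially, the theorem imposes no lower bound on $|C|$, so the choice $C = \Phi$ is legitimate.

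Applying Theorem~\ref{thm:nc_a} to this partition then yields that at least one of $C \cup R \ARightarrow L$ or $L \cup C \ARightarrow R$ must hold. Since $C = \Phi$, we have $C \cup R = R$ and $L \cup C = L$, so these two conditions collapse to $R \ARightarrow L$ and $L \ARightarrow R$, respectively. Thus at least one of $L \ARightarrow R$ or $R \ARightarrow L$ holds, which is exactly the conclusion of the corollary.

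I expect no genuine obstacle, as the argument is a straightforward instantiation of the more general theorem. The one point deserving a moment's care is that the relation $\ARightarrow$ in Definition~\ref{def:absorb} is defined only for non-empty disjoint sets; after the substitution the sets appearing are $L$ and $R$, which are non-empty by hypothesis and disjoint as parts of a partition, so both $R \ARightarrow L$ and $L \ARightarrow R$ are well-defined and the statement is meaningful.
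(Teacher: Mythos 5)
Your proposal is correct and is exactly the paper's own argument: the paper's proof consists of the single line ``The proof follows by setting $C=\Phi$ in Theorem~\ref{thm:nc_a}.'' Your additional checks (that the theorem permits $C=\Phi$, and that $L$ and $R$ are non-empty and disjoint so $\ARightarrow$ is well-defined) are sound and simply make explicit what the paper leaves implicit.
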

\begin{proof}
The proof follows by setting $C=\Phi$ in Theorem~\ref{thm:nc_a}.
\end{proof}

\begin{corollary}
\label{cor:5f}
The number of nodes $n$ must exceed $5f$ for
the existence of a correct Async-IABC algorithm that tolerates $f$ failures.
\end{corollary}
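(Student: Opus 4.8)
The plan is to prove the contrapositive using Corollary~\ref{cor:nc2}: I will show that if $n \le 5f$, then one can exhibit a partition $\{F,L,R\}$ of $\scriptv$ that violates the conclusion of Corollary~\ref{cor:nc2} (i.e.\ neither $L\ARightarrow R$ nor $R\ARightarrow L$ holds), so no correct Async-IABC algorithm can exist. Accordingly, I would assume for contradiction that a correct algorithm exists and that $n \le 5f$.

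The key observation is a cardinality lower bound hidden in the relation $\ARightarrow$. Since $G$ is a simple directed graph, any node $v$ has at most one incoming edge from each other node, so $|N_v^-\cap A| \le |A|$ for every set $A$. Consequently, $A \ARightarrow B$ (which requires some $v\in B$ with $|N_v^-\cap A| > 2f$) can hold only if $|A| \ge 2f+1$; in particular, if $|A| \le 2f$ then $A \not\ARightarrow B$ regardless of $B$. This is the one fact that turns the robustness-style condition of Corollary~\ref{cor:nc2} into a pure counting statement about set sizes.

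With this in hand I would construct the offending partition explicitly. Choose $F$ with $|F| = f$ (possible since $n \ge 3f+1 > f$), leaving $n-f$ fault-free nodes. Because $n \le 5f$ we have $n-f \le 4f$, and because $n \ge 3f+1$ we also have $n-f \ge 2f+1 \ge 2$. Hence the remaining $n-f$ nodes can be split into two non-empty sets $L$ and $R$ with $|L| \le 2f$ and $|R| \le 2f$; for instance take $|L| = \min(2f,\, n-f-1)$ and let $R$ consist of the rest. By the cardinality observation, $|L| \le 2f$ forces $L \not\ARightarrow R$ and $|R| \le 2f$ forces $R \not\ARightarrow L$, directly contradicting Corollary~\ref{cor:nc2}. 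This contradiction establishes $n > 5f$.

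The steps are all elementary counting, so there is no deep obstacle; the only point requiring care is verifying that a split into two non-empty parts each of size at most $2f$ is genuinely feasible for every $n$ in the range $3f+1 \le n \le 5f$, which is precisely why I record the bounds $n-f \le 4f$ and $n-f \ge 2$ separately. I would also confirm at the outset that $f \ge 1$ is the intended regime (the statement concerns tolerating $f$ failures), so that choosing $|F| = f$ and the attendant inequalities are meaningful.
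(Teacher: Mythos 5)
Your proposal is correct and takes essentially the same route as the paper: both argue by contradiction via Corollary~\ref{cor:nc2}, using the counting fact that $A \ARightarrow B$ forces $|A| \geq 2f+1$, applied to a partition in which $L$ and $R$ each have at most $2f$ nodes. The only (cosmetic) difference is in building that partition: the paper splits into two cases ($F = \Phi$ when $n \leq 4f$, and $|F| = n-4f$ when $4f < n \leq 5f$), whereas you fix $|F| = f$ throughout and lean on the standing model assumption $n \geq \max(2, 3f+1)$ to guarantee the remaining nodes split into two non-empty parts of size at most $2f$ — both constructions are valid.
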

\begin{proof}
The proof is by contradiction.
Suppose that $2\leq n\leq 5f$, and consider the following two cases:
\begin{itemize}
\item $2\leq n\leq 4f$: Suppose that $L,R,F$ is a partition of $\scriptv$
such that $|L|=\lceil n/2 \rceil\leq 2f$,
$|R|=\lfloor n/2 \rfloor\leq 2f$ and $F=\Phi$. Note that $L$ and $R$
are non-empty, and $|L|+|R|=n$.
\item $4f<n\leq 5f$:

Suppose that $L,R,F$ is a partition of $\scriptv$,
such that $|L|=|R|=2f$ and $|F|=n-4f$. Note that
$0<|F|\leq f$.
\end{itemize}
In both cases above, Corollary~\ref{cor:nc2} is applicable. Thus,
either $L\ARightarrow R$ or $R\ARightarrow L$.
For $L\ARightarrow R$ to be true, $L$ must contain at least $2f+1$ nodes.
Similarly, for $R\ARightarrow L$ to be true, $R$ must contain at least
$2f+1$ nodes. Therefore, at least one of the sets $L$ and $R$ must contain more than
$2f$ nodes. This contradicts our choice of $L$ and $R$ above
(in both cases, size of $L$ and $R$ is $\leq 2f$).
Therefore, $n$ must be larger than $5f$. 
\end{proof}

\dividerline

\begin{corollary}
\label{cor:3f+1}
For the existence of a correct Async-IABC algorithm, then for each node $i\in\scriptv$,
$|N_i^-|\geq 3f+1$, i.e., each node $i$
has at least $3f+1$ incoming links, when $f>0$.
\end{corollary}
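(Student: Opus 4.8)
The plan is to argue by contradiction and reduce everything to Corollary~\ref{cor:nc2}. Suppose a correct Async-IABC algorithm exists but, contrary to the claim, some node $i$ has $|N_i^-|\le 3f$. The idea is to isolate $i$ in a singleton set and absorb up to $f$ of its incoming neighbors into the faulty set, so that the number of remaining incoming links to $i$ drops to at most $2f$. Concretely, I would set $L=\{i\}$, let $F$ be any subset of $N_i^-$ with $|F|=\min(f,|N_i^-|)$, and let $R=\scriptv - F - L$. Since $0\le|F|\le f$ and (by the counting in the next step) $R$ is non-empty, $\{F,L,R\}$ is a legitimate partition of $\scriptv$ to which Corollary~\ref{cor:nc2} applies, yielding $L\ARightarrow R$ or $R\ARightarrow L$.

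Next I would rule out \emph{both} of these relations. For $L\ARightarrow R$: because $L=\{i\}$ is a singleton, every node of $R$ has at most one incoming link from $L$, and since $f>0$ we have $1\le 2f$; hence no node of $R$ can have more than $2f$ incoming links from $L$, so $L\not\ARightarrow R$. For $R\ARightarrow L$: the only candidate node in $L$ is $i$ itself, and $N_i^-\cap R=N_i^-\setminus F$ because $i\notin N_i^-$ and there is no $C$ set here. Since $F\subseteq N_i^-$, we get $|N_i^-\setminus F|=|N_i^-|-\min(f,|N_i^-|)\le 3f-f=2f$ (and it equals $0$ when $|N_i^-|<f$), so $i$ receives at most $2f$ incoming links from $R$; hence $R\not\ARightarrow L$.

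Having established both $L\not\ARightarrow R$ and $R\not\ARightarrow L$ directly contradicts Corollary~\ref{cor:nc2}, which forces at least one of them to hold, completing the argument. I expect the delicate points to be bookkeeping rather than conceptual: checking $R\neq\Phi$ (which follows from $|F\cup L|\le f+1<n$, using $n\ge 3f+1$, or more comfortably from Corollary~\ref{cor:5f}), and handling the degenerate case $|N_i^-|<f$ uniformly by taking $F=N_i^-$. The one genuinely load-bearing observation—and the reason the whole scheme works—is that a singleton $L$ can never absorb any set once $f\ge 1$, so Corollary~\ref{cor:nc2} must discharge its conclusion through the relation $R\ARightarrow L$, which is precisely the incoming-degree count we have capped at $2f$.
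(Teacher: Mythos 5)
Your proof is correct and is essentially the paper's own argument: both isolate $i$ as a singleton $L=\{i\}$, move up to $f$ of its incoming neighbors into $F$, and then show $L\not\ARightarrow R$ and $R\not\ARightarrow L$ to contradict Corollary~\ref{cor:nc2}. The only difference is cosmetic: the paper splits into two cases ($|N_i^-|\leq 2f$ with $F=\Phi$, and $2f<|N_i^-|\leq 3f$ with $|F|=f$), whereas you unify them by taking $|F|=\min(f,|N_i^-|)$.
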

\begin{proof}
The proof is by contradiction. Consider the following two cases for some node $i$:

\begin{itemize}
\item $|N_i^-|\leq 2f$: Define set $F = \Phi, L = \{i\}$ and $R = V - F - L = V - \{i\}$. Thus, $N_i^-\cap R = N_i^-$, and $|N_i^-\cap R| \leq 2f$ by assumption.

\item $2f < |N_i^-| \leq 3f$: Define set $L = \{i\}$. Partition $N_i^-$ into two sets $F$ and $H$ such that $|F| = f$ and $|H| = |N_i^-| - f \leq 2f$. Define $R = V - F - L = V - F - \{i\}$. Thus, $N_i^-\cap R = H$, and $|N_i^-\cap R| \leq 2f$ by construction.
\end{itemize}

In both cases above, $L$ and $R$ are non-empty, so Corollary \ref{cor:nc2} is applicable. However, in each case, $L = \{i\}$ and $|L| = 1 < 2f+1$; hence, $L \not\ARightarrow R$. Also, since $L = \{i\}$ and  $|N_i^- \cap R| \leq 2f$, and hence $R \not\ARightarrow L$ by the definition of $\ARightarrow$. This leads to a contradiction. Hence, every node must have at least $3f+1$ incoming neighbors.

\end{proof}

\section{Useful Lemmas}
\label{s_useful}

In this section, we introduce two lemmas that are used in our proof of convergence. Note that the proofs are similar to corresponding lemmas in \cite{IBA_sync} except for the adoption of $\ARightarrow$ and ``rounds'' instead of $\Rightarrow$ and ``iterations.''

\begin{definition}
For disjoint sets $A,B$, 
$in(A \ARightarrow B)$ denotes the set of
all the nodes in $B$ that each have at least $2f+1$ incoming links from
nodes in $A$. More formally,
\[
in(A\ARightarrow B) = \{~v~|v\in B \mbox{\normalfont~and~}~2f+1\leq |N_v^-\cap A|~\}
\]

With a slight abuse of notation, when $A\not\ARightarrow B$, define $in(A\ARightarrow B)=\Phi$.
\end{definition}

\shortdividerline

\begin{definition}
\label{def:absorb_sequence}
For {\em non-empty disjoint} sets $A$ and $B$, set $A$ is said to {\bf propagate to} set $B$ in $l$ rounds, where $l>0$,
if there exist sequences of sets $A_0,A_1,A_2,\cdots,A_l$ and $B_0,B_1,B_2,\cdots,B_l$ (propagating sequences) such that

\begin{itemize}
\item $A_0=A$, $B_0=B$, $B_l=\Phi$, and, for $\tau<l$, $B_\tau \neq \Phi$.
\item for $0\leq \tau\leq l-1$,
\begin{list}{}{}
\item[*] $A_\tau\ARightarrow B_\tau$,
\item[*] $A_{\tau+1} = A_\tau\cup in(A_\tau\ARightarrow B_\tau)$, and
\item[*] $B_{\tau+1} = B_\tau - in(A_\tau\ARightarrow B_\tau)$
\end{list}
\end{itemize}
\end{definition}
Observe that $A_\tau$ and $B_\tau$ form a partition of $A\cup B$,
and for $\tau<l$, $in(A_\tau\ARightarrow B_\tau)\neq \Phi$.
Also, when set $A$ propagates to set $B$, length $l$ above is
necessarily finite. In particular, $l$ is upper bounded by $n-2f-1$, since set $A$ must
be of size at least $2f+1$ for it to propagate to $B$.
\dividerline

\comment{====================Some explanation =================
=====Notice that in the proof of the following useful lemmas, there is no notion of iteration involves. In other words, the delay does not affect the correctness of the proof. As long as message can be delivered in order and in finite amount of time, the proof follows. Thus, using the new definitions, we have the following two lemmas for asynchronous algorithm. 

There is one subtle concept worthy of some discussion. The propagating sequence is a global view of the system\footnote{Such concept is just for the analysis, and each node does not need to know the global view.}. In synchronous system in \cite{IBA_sync}, there is not much confusion, since global time is consistent with local time at each node. In asynchronous system, the propagating sequence is still a global view, but it is a view with respect to the notion of "round" instead of real time (measured by some external clock). For example, $\tau$ here means all the node values in round $\tau$, and should not be confused with some real time $\tau$. ===========================}

\begin{lemma}
\label{lemma:absorb_condition}
Assume that $G(\scriptv,\scripte)$ satisfies Theorem~\ref{thm:nc_a}.
Consider a partition $A,B,F$ of $\scriptv$ such that
$A$ and $B$ are non-empty, and $|F|\leq f$.
If $B \not\ARightarrow A$, then set $A$ propagates to set $B$.
\end{lemma}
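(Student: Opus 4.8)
The plan is to prove the lemma by constructing the propagating sequences $A_0, A_1, \dots, A_l$ and $B_0, B_1, \dots, B_l$ explicitly via the iterative rule in Definition~\ref{def:absorb_sequence}, and to show that this process must terminate with $B_l = \Phi$ after finitely many steps. I would set $A_0 = A$ and $B_0 = B$, and then at each stage $\tau$ (as long as $B_\tau \neq \Phi$) appeal to Theorem~\ref{thm:nc_a} to argue that $A_\tau \ARightarrow B_\tau$, which guarantees $in(A_\tau \ARightarrow B_\tau) \neq \Phi$, so I can legitimately form $A_{\tau+1}$ and $B_{\tau+1}$ by moving these newly absorbed nodes from $B_\tau$ into $A_\tau$. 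Since each step moves at least one node out of $B_\tau$ and $B$ is finite, the sequence must reach $B_l = \Phi$ in at most $|B|$ steps, giving the finite $l$ promised by the definition (bounded by $n - 2f - 1$ as already observed).

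The crux of the argument is establishing the invariant that, at every stage $\tau$ with $B_\tau \neq \Phi$, one has $A_\tau \ARightarrow B_\tau$. I would prove this by applying Theorem~\ref{thm:nc_a} to a suitable partition of $\scriptv$. The natural choice is to take the theorem's sets as $L \leftarrow B_\tau$, $R \leftarrow$ (something built from $A_\tau$), and to fold the remaining nodes of $A$ together with $F$ appropriately, so that one of the two conclusions $C\cup R \ARightarrow L$ or $L\cup C \ARightarrow R$ forces $A_\tau \ARightarrow B_\tau$. The hypothesis $B \not\ARightarrow A$ must be leveraged here to rule out the ``wrong'' alternative: because no node of $A$ receives $2f+1$ links from $B$, and because $A_\tau \supseteq A$ while $B_\tau \subseteq B$, I expect to be able to show $B_\tau \not\ARightarrow$ (the $A$-derived set), so that Theorem~\ref{thm:nc_a} leaves only the conclusion yielding $A_\tau \ARightarrow B_\tau$.

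The main obstacle will be setting up the partition for Theorem~\ref{thm:nc_a} correctly and carefully tracking which direction of the absorption relation survives. Specifically, I need to verify that the hypothesis $B \not\ARightarrow A$ indeed propagates to the claim $B_\tau \not\ARightarrow A_\tau$ (or to whichever non-absorption fact blocks the undesired alternative of the theorem), since $A_\tau$ grows over the rounds and one must confirm that nodes added to $A_\tau$ do not create $2f+1$ incoming links from $B_\tau$ back into the $A$-side set in a way that breaks the argument. Handling the set $C$ of ``neither'' nodes—the part of $\scriptv$ not in the current $A_\tau$, $B_\tau$, or $F$—requires care: I expect $C$ to be empty if I partition all of $A \cup B \cup F = \scriptv$ into just the three pieces, which simplifies matters, but I must confirm that $F$ with $|F|\le f$ can be carried through unchanged and that both $B_\tau$ and the $A$-side remain non-empty so the theorem applies. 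Once the invariant $A_\tau \ARightarrow B_\tau$ is secured at each stage, the remaining termination and well-definedness claims are routine.
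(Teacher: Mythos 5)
Your scaffolding matches the paper's proof exactly: induct on $\tau$, build $A_\tau, B_\tau$ by the rule in Definition~\ref{def:absorb_sequence}, invoke Theorem~\ref{thm:nc_a} on a partition of $\scriptv$ to force $A_\tau\ARightarrow B_\tau$ whenever $B_\tau\neq\Phi$, and terminate because $B_\tau$ loses at least one node per step. But there is a genuine gap at the crux, and it sits precisely at the point you flagged but did not resolve. Your preferred simplification---taking $C=\Phi$, i.e., applying the theorem to the partition $B_\tau, A_\tau, F$---is the version that fails. With that choice the theorem only tells you that $A_\tau\ARightarrow B_\tau$ \emph{or} $B_\tau\ARightarrow A_\tau$, and you must rule out the latter. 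The hypothesis $B\not\ARightarrow A$ does \emph{not} propagate to $B_\tau\not\ARightarrow A_\tau$: it constrains only the in-links of nodes in the original $A$, whereas $A_\tau$ contains absorbed nodes that started out in $B$, and nothing in the hypothesis restricts edges among nodes of $B$. Such an absorbed node can perfectly well have $2f+1$ incoming links from $B_\tau$, in which case $B_\tau\ARightarrow A_\tau$ holds, the theorem is satisfied by the wrong alternative, and your invariant is not established. (Your other suggestion, folding leftover nodes of $A$ into $F$, is not permissible either: the theorem requires $|F|\leq f$, so $F$ cannot be enlarged.)

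The missing idea---the key move in the paper's proof---is to keep $C$ \emph{nonempty} and anchor the hypothesis at the original sets. Assume $A_{\tau+1}\not\ARightarrow B_{\tau+1}$ for contradiction, and apply Theorem~\ref{thm:nc_a} with $L=A_0$ (the original $A$), $C=A_{\tau+1}-A_0$ (the nodes absorbed so far, which also equals $B_0-B_{\tau+1}$), $R=B_{\tau+1}$, and $F$ unchanged. Then $C\cup R=B_0$ and $L\cup C=A_{\tau+1}$, so the two alternatives of the theorem read literally as $B_0\ARightarrow A_0$ and $A_{\tau+1}\ARightarrow B_{\tau+1}$: the first is excluded verbatim by the hypothesis $B\not\ARightarrow A$ (no propagation of the non-absorption fact to later stages is needed), and the second contradicts the assumption, which completes the induction. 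Without this three-way split your argument cannot be repaired, since the invariant $B_\tau\not\ARightarrow A_\tau$ that the $C=\Phi$ route needs is simply false in general.
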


\begin{proof}
Since $A,B$ are non-empty, and $B\not\ARightarrow A$, by
Corollary~\ref{cor:nc2}, we have $A\ARightarrow B$.

The proof is by induction.
Define $A_0=A$ and $B_0=B$.
Thus $A_0\ARightarrow B_0$ and $B_0\not\ARightarrow A_0$.
Note that $A_0$ and $B_0$ are non-empty.

\noindent
{\em Induction basis}: For some $\tau\geq 0$,
\begin{itemize}
\item for $0\leq k < \tau$, $A_k\ARightarrow B_k$, and $B_k\neq \Phi$,
\item either $B_\tau=\Phi$ or $A_\tau\ARightarrow B_\tau$,
\item for $0\leq k< \tau$,
$A_{k+1} = A_k \cup in(A_k\ARightarrow B_k)$, and
$B_{k+1}=B_k - in(A_k\ARightarrow B_k)$
\end{itemize}
Since $A_0\ARightarrow B_0$, 
the induction basis holds true for $\tau=0$.
 
\noindent
{\em Induction:}
If $B_\tau=\Phi$, then the proof is complete, since all
the conditions specified in Definition~\ref{def:absorb_sequence} are satisfied
by the sequences of sets $A_0,A_1,\cdots,A_\tau$ and $B_0,B_1,\cdots,B_\tau$.

Now consider the case when $B_\tau\neq \Phi$. By assumption,
$A_k\ARightarrow B_k$, for $0\leq k\leq \tau$.
Define $A_{\tau+1} = A_\tau \cup in(A_\tau\ARightarrow B_\tau)$ and $B_{\tau+1}=B_\tau - in(A_\tau\ARightarrow B_\tau)$.
Our goal is to prove that either $B_{\tau+1}=\Phi$ or $A_{\tau+1}\ARightarrow B_{\tau+1}$.
If $B_{\tau+1}=\Phi$, then the induction is complete. Therefore, now let us assume
that $B_{\tau+1}\neq \Phi$ and prove that $A_{\tau+1}\ARightarrow B_{\tau+1}$.
 We will prove this by contradiction.

Suppose that $A_{\tau+1}\not\ARightarrow B_{\tau+1}$.
Define subsets $L,C,R$ as follows: $L=A_0$, $C=A_{\tau+1}-A_0$ and $R=B_{\tau+1}$. Due to the manner in which $A_k$'s and $B_k$'s
are defined, we also have $C=B_0-B_{\tau+1}$.
Observe that $L,C,R,F$ form a partition of $\scriptv$, where $L,R$ are
non-empty, and the following relationships hold:
\begin{itemize}
\item $C\cup R = B_0$, and
\item $L\cup C=A_{\tau+1}$
\end{itemize}
Rewriting $B_0\not\ARightarrow A_0$ and $A_{\tau+1}\not\ARightarrow B_{\tau+1}$,
using the above relationships, we have, respectively,
\[
C\cup R\not\ARightarrow L,
\]
and
\[
L\cup C\not\ARightarrow R
\]
This violates the necessary condition in Theorem~\ref{thm:nc_a}.
This is a contradiction, completing the induction.

Thus, we have proved that, either (i) $B_{\tau+1}=\Phi$,
or (ii) $A_{\tau+1}\ARightarrow B_{\tau+1}$.
Eventually, for large enough $t$, $B_t$ will become $\Phi$, resulting
in the propagating sequences $A_0,A_1,\cdots, A_t$ and
$B_0,B_1,\cdots,B_t$, satisfying the conditions in Definition~\ref{def:absorb_sequence}.
Therefore, $A$ propagates to $B$. 
\end{proof}

\dividerline

\begin{lemma}
\label{lemma:must_absorb}
Assume that $G(\scriptv,\scripte)$ satisfies Theorem~\ref{thm:nc_a}.
For any partition $A,B,F$ of $\scriptv$, where $A,B$ are both non-empty,
and $|F|\leq f$,
at least one of the following conditions must be true:
\begin{itemize}
\item $A$ propagates to $B$, or
\item $B$ propagates to $A$
\end{itemize}
\end{lemma}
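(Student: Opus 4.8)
The plan is to argue by contradiction: I would assume that neither $A$ propagates to $B$ nor $B$ propagates to $A$, and derive a violation of the necessary condition in Theorem~\ref{thm:nc_a}. The starting observation is that, by Definition~\ref{def:absorb_sequence}, the propagating sequence is not a free choice: once $A_0=A$ and $B_0=B$ are fixed, the update $A_{\tau+1}=A_\tau\cup in(A_\tau\ARightarrow B_\tau)$ determines the whole sequence. Since $A_\tau$ is non-decreasing and bounded by $\scriptv-F$, it stabilizes at some maximal set $A_g$, at which point $in(A_g\ARightarrow B_g)=\Phi$. Hence ``$A$ does not propagate to $B$'' is equivalent to saying this process stabilizes with a nonempty remainder, i.e. there is a set $A_g\supseteq A$ with $B_g:=(\scriptv-F)-A_g\neq\Phi$ and $A_g\not\ARightarrow B_g$.

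Applying this to both directions gives the two ingredients I need. Failure of $A\to B$ yields $A_g\supseteq A$ with $B_g\neq\Phi$ and $A_g\not\ARightarrow B_g$. Symmetrically, failure of $B\to A$ yields $B_h\supseteq B$ with $A_h:=(\scriptv-F)-B_h\neq\Phi$ and $B_h\not\ARightarrow A_h$. The key structural fact is that these two ``stuck'' partitions nest: from $A\subseteq A_g$ and, dually, $B\subseteq B_h$ (which forces $A_h=(\scriptv-F)-B_h\subseteq(\scriptv-F)-B=A$), I get $A_h\subseteq A\subseteq A_g$ and $B_g\subseteq B\subseteq B_h$. In particular $A_h$ and $B_g$ are disjoint and both nonempty.

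The final step merges the two stuck partitions into a single one suitable for Theorem~\ref{thm:nc_a}. I would set $L=A_h$, $R=B_g$, and $C=(\scriptv-F)-A_h-B_g$, so that $F,L,C,R$ partition $\scriptv$ with $L,R$ nonempty and $|F|\le f$. A short set computation gives $L\cup C=(\scriptv-F)-B_g=A_g$ and $C\cup R=(\scriptv-F)-A_h=B_h$. Substituting these identities into the two stuck conditions turns $A_g\not\ARightarrow B_g$ into $L\cup C\not\ARightarrow R$, and $B_h\not\ARightarrow A_h$ into $C\cup R\not\ARightarrow L$. But Theorem~\ref{thm:nc_a} asserts that for this partition at least one of $C\cup R\ARightarrow L$ or $L\cup C\ARightarrow R$ must hold, a contradiction. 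Hence at least one of the two processes must reach an empty remainder, i.e. $A$ propagates to $B$ or $B$ propagates to $A$.

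I expect the main obstacle to be exactly the case where both $A\ARightarrow B$ and $B\ARightarrow A$ hold. The tempting short argument, namely to split on Corollary~\ref{cor:nc2} and apply Lemma~\ref{lemma:absorb_condition} in whichever direction the hypothesis $B\not\ARightarrow A$ (or $A\not\ARightarrow B$) happens to be available, covers every case except this one, since Lemma~\ref{lemma:absorb_condition} needs a $\not\ARightarrow$ relation to get started and in the both-directions case neither is available. The contradiction above sidesteps the issue by not choosing a direction at all: it runs both deterministic processes to their stabilization points and exploits the nesting of the resulting sets so that $A_h$ and $B_g$ can serve as the two nonempty sides $L$ and $R$ of a single partition to which the necessary condition applies. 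The one technical point to get right is the characterization of non-propagation via stabilization of the forced sequence (and the nonemptiness of $A_h$ and $B_g$), after which the set identities are routine.
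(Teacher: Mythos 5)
Your proposal is correct, but it takes a genuinely different route from the paper. The paper's proof is asymmetric and leans on Lemma~\ref{lemma:absorb_condition}: it splits on whether $A\ARightarrow B$; if $A\not\ARightarrow B$, Lemma~\ref{lemma:absorb_condition} immediately gives that $B$ propagates to $A$; if $A\ARightarrow B$ but $A$ does not propagate to $B$, it runs the forced sequence to its stuck point $(A_k,B_k)$ with $B_k\neq\Phi$ and $A_k\not\ARightarrow B_k$, applies Lemma~\ref{lemma:absorb_condition} to the partition $A_k,B_k,F$ to get that $B_k$ propagates to $A_k$, and then asserts---with only ``it should be easy to see''---that since $B_k\subseteq B$ and $A\subseteq A_k$, propagation of $B_k$ to $A_k$ implies propagation of $B$ to $A$. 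That last monotonicity step is true but is exactly the part the paper leaves unproved (it needs an induction showing the forced sequence from the larger source dominates the one from the smaller source). Your symmetric argument sidesteps both ingredients: you never invoke Lemma~\ref{lemma:absorb_condition} and never need monotonicity, because you run \emph{both} forced processes to their stabilization points, observe the nesting $A_h\subseteq A\subseteq A_g$ and $B_g\subseteq B\subseteq B_h$, and merge the two stuck partitions into the single partition $L=A_h$, $R=B_g$, $C=(\scriptv-F)-A_h-B_g$, for which the identities $L\cup C=A_g$ and $C\cup R=B_h$ turn the two stuck conditions into $L\cup C\not\ARightarrow R$ and $C\cup R\not\ARightarrow L$, contradicting Theorem~\ref{thm:nc_a} directly. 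In effect you have inlined and symmetrized the contradiction that the paper uses \emph{inside} the proof of Lemma~\ref{lemma:absorb_condition} (where $L=A_0$, $C=A_{\tau+1}-A_0$, $R=B_{\tau+1}$). What each approach buys: the paper's version is shorter on the page because it reuses Lemma~\ref{lemma:absorb_condition}, but at the cost of an unproved gluing step; yours is self-contained given the stabilization characterization of non-propagation (which you correctly justify from the determinism and monotonicity of the forced sequence, and which the paper also implicitly uses when it asserts the existence of the stuck index $k$), and it closes that gap cleanly.
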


\begin{proof}
Consider two cases:
\begin{itemize}
\item $A\not\ARightarrow B$: Then by Lemma \ref{lemma:absorb_condition},
$B$ propagates to $A$, completing the proof.

\item $A\ARightarrow B$: In this case, consider two sub-cases:
\begin{itemize}
\item $A$ propagates to $B$: The proof in this case is complete.

\item $A$ does not propagate to $B$:
Thus, propagating sequences defined in Definition~\ref{def:absorb_sequence}
do not exist in this case. More precisely, there must exist $k>0$,
and sets $A_0,A_1,\cdots,A_k$ and $B_0,B_1,\cdots,B_k$,
such that:
\begin{itemize}
\item $A_0=A$ and $B_0=B$, and
\item for $0\leq i\leq k-1$,
\begin{list}{}{}
\item[o] $A_i\ARightarrow B_i$,
\item[o] $A_{i+1} = A_i\cup in(A_i\ARightarrow B_i)$, and
\item[o] $B_{i+1} = B_i - in(A_i\ARightarrow B_i)$.
\end{list}
\item $B_{k}\neq \Phi$ and $A_{k}\not\ARightarrow B_{k}$.
\end{itemize}
The last condition above violates the requirements for $A$ to propagate
to $B$.

Now $A_{k}\neq \Phi$, $B_k\neq \Phi$, and $A_k,B_k,F$ form
a partition of $\scriptv$. Since $A_{k}\not\ARightarrow B_{k}$,
by Lemma \ref{lemma:absorb_condition},
$B_k$ propagates to $A_k$.

Since $B_k\subseteq B_0 = B$, $A\subseteq A_k$, and $B_k$ propagates
to $A_k$, it should be easy to see that $B$ propagates to $A$.
 
\end{itemize}
\end{itemize}
\end{proof}

\section{Sufficient Condition}
\label{s_sufficiency}

\subsection{Algorithm 2}

We will prove that there exists an Async-IABC algorithm -- particularly
{\em Algorithm 2} below -- that satisfies
the {\em validity} and {\em convergence} conditions provided that the
graph $G(\scriptv,\scripte)$ satisfies the necessary condition in
Theorem~\ref{thm:nc_a}. This implies that the necessary condition
in Theorem~\ref{thm:nc_a} is also sufficient.

{\em Algorithm 2} has the three-step structure, and it is
similar to algorithms that were analyzed in prior
work as well \cite{AA_Dolev_1986,AA_async_PCN}
(although correctness of the algorithm
under the necessary condition in Theorem \ref{thm:nc_a}
has not been proved previously).

\vspace*{8pt}\hrule
{\bf Algorithm 2}
\vspace*{4pt}\hrule

\begin{enumerate}

\item {\em Transmit step:} Transmit current state $v_i[t-1]$ on all outgoing edges.
\item {Receive step:} Wait until receiving values on all but $f$ incoming edges. These values form
vector $r_i[t]$ of size $|N_i^-|-f$.\footnote{If more than $|N_i^-|-f$ values arrive at the same time, break ties arbitrarily.}

\item {\em Update step:}
Sort the values in $r_i[t]$ in an increasing order, and eliminate
the smallest $f$ values, and the largest $f$ values (breaking ties
arbitrarily).
 Let $N_i^*[t]$ denote the identifiers of nodes from
whom the remaining $N_i^- - 3f$ values were received, and let
$w_j$ denote the value received from node $j\in N_i^*$.
For convenience, define $w_i=v_i[t-1]$ to be the value node
$i$ ``receives'' from itself.  
Observe that
if $j\in \{i\}\cup N_i^*[t]$ is fault-free, then $w_j=v_j[t-1]$.

Define
\begin{eqnarray}
v_i[t] ~ = ~ Z_i(r_i[t],v_i[t-1]) ~ = ~\sum_{j\in \{i\}\cup N_i^*[t]} a_i \, w_j
\label{e_Z}
\end{eqnarray}
where
\[ a_i = \frac{1}{|N_i^-|+1-3f}
\] 
Note that $|N_i^*[t]| = |N_i^-| - 3f$, and $i\not\in N_i^*[t]$
because $(i,i)\not\in\scripte$.
The ``weight'' of each term on the right-hand side of
(\ref{e_Z}) is $a_i$, and these weights add to 1.
Also, $0<a_i\leq 1$.
For future reference, let us define $\alpha$ as:
\begin{eqnarray}
\alpha = \min_{i\in \scriptv}~a_i
\label{e_alpha}
\end{eqnarray}

\end{enumerate}

\hrule

\subsection{Sufficiency}

In Theorems \ref{thm:validity} and \ref{thm:convergence}
in this section, we
prove that Algorithm 2 satisfies {\em validity} and {\em convergence}
conditions, respectively, provided that $G(\scriptv,\scripte)$
satisfies the condition below, which matches the necessary
condition stated in Theorem \ref{thm:nc_a}.

~

\noindent{\bf Sufficient condition:}
{\em 
For every partition $F,L,C,R$ of $\scriptv$, such that $L$ and $R$ are both
non-empty, and $F$ contains at most $f$ nodes,
at least one of these two conditions is true:
(i) $C\cup R\ARightarrow L$, or (ii) $L\cup C\ARightarrow R$.
}

~

Note that the proofs below are similar to the ones for synchronous systems in \cite{IBA_sync}. The main differences are the following:

\begin{itemize}
\item We need to consider only values in $N_i^@[t]$ not in $N_i^-$. This is due to different step 2 between Algorithm 1 \cite{IBA_sync} and Algorithm 2.
\item We interpret $t$ as round index, rather than iteration index.
\end{itemize}

~

\begin{theorem}
\label{thm:validity}
Suppose that $G(\scriptv, \scripte)$ satisfies Theorem~\ref{thm:nc_a}.
Then Algorithm 2 satisfies the {\em validity} condition.
\end{theorem}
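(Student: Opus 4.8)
The plan is to prove validity directly from the structure of the update rule in Equation~(\ref{e_Z}), showing that each fault-free node's new state stays within the range $[\mu[t-1], U[t-1]]$ of the previous fault-free states. Validity requires $\mu[t]\geq\mu[t-1]$ and $U[t]\leq U[t-1]$ for all $t>0$; by the definitions of $U[t]$ and $\mu[t]$ as the max and min over fault-free nodes, it suffices to show that for every fault-free node $i$, we have $\mu[t-1]\leq v_i[t]\leq U[t-1]$.

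The key step is to argue that $v_i[t]$ is a convex combination (weighted average) of values each lying in $[\mu[t-1], U[t-1]]$. First I would fix a fault-free node $i\in\scriptv-\scriptf$ and examine the multiset of values that survive the trimming in the update step. After sorting $r_i[t]$ and discarding the smallest $f$ and largest $f$ values, the surviving values are indexed by $\{i\}\cup N_i^*[t]$, where $|N_i^*[t]|=|N_i^-|-3f$. For each surviving $w_j$ I would show $\mu[t-1]\leq w_j\leq U[t-1]$. This is immediate when $j$ is fault-free, since then $w_j=v_j[t-1]$, which by definition lies in $[\mu[t-1],U[t-1]]$. The subtle case is a surviving value contributed by a faulty node. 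Here I would invoke the standard trimming argument: because at most $f$ of the values in $r_i[t]$ come from faulty nodes, any surviving faulty value $w_j$ must be bounded above by some retained (i.e., not among the largest $f$ discarded) fault-free value and bounded below by some retained fault-free value; hence $w_j$ still lies within $[\mu[t-1],U[t-1]]$.

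Once every $w_j$ for $j\in\{i\}\cup N_i^*[t]$ is shown to lie in $[\mu[t-1],U[t-1]]$, the conclusion follows from the fact that $v_i[t]=\sum_{j\in\{i\}\cup N_i^*[t]} a_i\,w_j$ is a convex combination, since the weights $a_i$ are positive and sum to $1$ as noted after Equation~(\ref{e_Z}). A convex combination of values in an interval remains in that interval, so $\mu[t-1]\leq v_i[t]\leq U[t-1]$. Taking the maximum and minimum over all fault-free $i$ then gives $U[t]\leq U[t-1]$ and $\mu[t]\geq\mu[t-1]$.

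I expect the main obstacle to be the careful justification that surviving faulty values are bounded by fault-free values in $[\mu[t-1],U[t-1]]$. The counting argument needs $|N_i^-|\geq 3f+1$ (guaranteed by Corollary~\ref{cor:3f+1}) so that at least one fault-free value survives on each side of the trim, ensuring the trimmed faulty values are genuinely sandwiched. The reasoning is standard for trimmed-mean Byzantine algorithms, but the asynchronous setting requires me to note that we reason only about the $|N_i^-|-f$ values actually received (the set $N_i^@[t]$), among which at most $f$ are faulty, so after discarding $2f$ extreme values the sandwiching still holds; this is precisely the first bullet point flagged in the comparison with the synchronous proof.
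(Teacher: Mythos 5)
Your proposal is correct and follows essentially the same route as the paper's proof: trim the $f$ largest and $f$ smallest of the $|N_i^-|-f$ received values, sandwich any surviving faulty value between fault-free values (which requires $|r_i[t]|\geq 2f+1$, i.e., $|N_i^-|\geq 3f+1$ from Corollary~\ref{cor:3f+1}), and conclude by convexity of the weighted average in (\ref{e_Z}). One small slip that does not affect the argument: the fault-free values that sandwich a surviving faulty $w_j$ need not be \emph{retained} values---they may themselves lie among the $f$ discarded extremes---but since any fault-free value from round $t-1$ lies in $[\mu[t-1],U[t-1]]$, the conclusion $w_j\in[\mu[t-1],U[t-1]]$ still follows.
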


\begin{proof}
Consider the $t$-th round, and any fault-free node $i\in\scriptv-\scriptf$.
Consider two cases:
\begin{itemize}
\item
$f=0$: In (\ref{e_Z}), note that $v_i[t]$ is computed using
states from the previous round at node $i$ and other nodes.
By definition of $\mu[t-1]$ and $U[t-1]$, $v_j[t-1]\in [\mu[t-1],U[t-1]]$
for all fault-free nodes $j\in\scriptv-\scriptf$.
Thus, in this case, all the values used in computing $v_i[t]$ are in the
range $[\mu[t-1],U[t-1]]$.
Since $v_i[t]$ is computed as a 
weighted average of these values, $v_i[t]$ is also within
$[\mu[t-1],U[t-1]]$.

\item $f>0$: By Corollary~\ref{cor:3f+1}, $|N_i^-|\geq 3f+1$. Thus, $|N_i^@|\geq 2f+1$, and $|r_i[t]| \geq 2f+1$.
When computing set $N_i^*[t]$, the largest $f$ and smallest $f$ values
from $r_i[t]$ are eliminated. Since at most $f$ nodes are faulty,
it follows that, either (i) the values received from the faulty
nodes are all eliminated, or (ii) the values from the
faulty nodes that still remain are between values
received from two fault-free nodes. Thus, the remaining values in $r_i[t]$ are
all in the range $[\mu[t-1],U[t-1]]$. Also, $v_i[t-1]$ is 
in $[\mu[t-1],U[t-1]]$, as per the definition of $\mu[t-1]$ and $U[t-1]$.
Thus $v_i[t]$ is computed as a 
weighted average of values in $[\mu[t-1],U[t-1]]$, and, therefore,
it will also be in $[\mu[t-1],U[t-1]]$.
\end{itemize}
Since $\forall i\in\scriptv-\scriptf$, $v_i[t]\in [\mu[t-1],U[t-1]]$,
the validity condition is satisfied.
\end{proof}

\dividerline

Before proving the convergence of Algorithm 2, we first present three lemmas. In the discussion below, we assume that $G(\scriptv, \scripte)$ satisfies the sufficient condition.

\begin{lemma}
\label{lemma:psi}
Consider node $i\in\scriptv-\scriptf$.
Let $\psi\leq \mu[t-1]$. Then, for $j\in \{i\}\cup N_i^*[t]$,
\[
v_i[t] - \psi  \geq  a_i ~ (w_j - \psi)
\]
Specifically, for fault-free $j\in \{i\}\cup N_i^*[t]$,
\[
v_i[t] - \psi  \geq  a_i ~ (v_j[t-1] - \psi)
\]
\end{lemma}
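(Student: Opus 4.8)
The plan is to exploit the fact that $v_i[t]$ in (\ref{e_Z}) is an affine combination of the values $w_k$, $k\in\{i\}\cup N_i^*[t]$, whose coefficients are all equal to $a_i$ and sum to exactly $1$. First I would subtract $\psi$ from both sides of (\ref{e_Z}) and use $\sum_{k\in\{i\}\cup N_i^*[t]} a_i = 1$ to rewrite
\[
v_i[t] - \psi ~=~ \sum_{k\in\{i\}\cup N_i^*[t]} a_i\,(w_k - \psi).
\]
Isolating the term for the distinguished index $j$ then gives
\[
v_i[t] - \psi ~=~ a_i\,(w_j - \psi) ~+~ \sum_{k\in(\{i\}\cup N_i^*[t])\setminus\{j\}} a_i\,(w_k - \psi),
\]
so the claimed inequality reduces to showing that the residual sum is non-negative. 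Since $a_i>0$, this in turn reduces to the single pointwise claim that $w_k \ge \psi$ for every $k\in\{i\}\cup N_i^*[t]$.

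The heart of the argument is therefore establishing $w_k \ge \mu[t-1] \ge \psi$ for each retained value, and this is exactly where the trimming in the update step is used, mirroring the $f>0$ case of the proof of Theorem~\ref{thm:validity}. For $k=i$ this is immediate, since $w_i = v_i[t-1] \ge \mu[t-1]$ because $i$ is fault-free. For $k\in N_i^*[t]$ arising from a fault-free node, $w_k = v_k[t-1] \ge \mu[t-1]$ by definition of $\mu[t-1]$. The only case needing care is a retained value $w_k$ contributed by a faulty node: since at most $f$ of the values in $r_i[t]$ are faulty and one of them is $w_k$ itself, at most $f-1$ faulty values can lie among the $f$ smallest values that were discarded, so at least one discarded smallest value, say $u$, comes from a fault-free node. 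Because $u$ was trimmed as one of the smallest $f$ while $w_k$ was retained, $u \le w_k$, and because $u$ is fault-free, $u \ge \mu[t-1]$; hence $w_k \ge u \ge \mu[t-1] \ge \psi$. (When $f=0$ there is no trimming and no faulty node, so every $w_k$ equals some fault-free $v_k[t-1]\ge\mu[t-1]$ trivially.)

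With $w_k-\psi\ge 0$ for all $k$, the residual sum is non-negative and the first inequality $v_i[t]-\psi \ge a_i(w_j-\psi)$ follows. The ``specifically'' statement is then immediate: when $j\in\{i\}\cup N_i^*[t]$ is fault-free we have $w_j = v_j[t-1]$ (as observed right after (\ref{e_Z})), so substituting into the first inequality yields $v_i[t]-\psi \ge a_i(v_j[t-1]-\psi)$. I expect the main obstacle to be the faulty-value case of the pointwise bound --- making precise the counting argument that a discarded fault-free value sits below any retained faulty value --- since everything else is routine bookkeeping with the uniform weights $a_i$.
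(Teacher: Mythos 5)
Your proof is correct and takes essentially the same route as the paper's: both arguments reduce the lemma to the pointwise bound $w_k \geq \mu[t-1] \geq \psi$ for every $k\in\{i\}\cup N_i^*[t]$, handle the faulty-retained case by exhibiting a trimmed fault-free value below $w_k$, and then use the fact that the weights $a_i$ sum to $1$ to conclude $v_i[t]-\psi \geq a_i(w_j-\psi)$. Your explicit counting step (at most $f-1$ faulty values among the $f$ discarded smallest, since the faulty $w_k$ itself was retained) is precisely the justification the paper leaves implicit behind its phrase ``it follows that.''
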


\begin{proof}
In (\ref{e_Z}), for each $j\in N_i^*[t]$, consider two cases:
\begin{itemize}
\item Either $j=i$ or  $j\in N_i^*[t]\cap (\scriptv-\scriptf)$: Thus, $j$ is fault-free.
In this case, $w_j=v_j[t-1]$. Therefore,
$\mu[t-1] \leq w_j\leq U[t-1]$.
\item $j$ is faulty: In this case, $f$ must be non-zero (otherwise,
all nodes are fault-free).  
From Corollary~\ref{cor:3f+1}, $|N_i^-|\geq 3f+1$. Thus, $|N_i^@|\geq 2f+1$, and $|r_i[t]| \geq 2f+1$.
Then it follows that the smallest $f$
values in $r_i[t]$ that are eliminated in step 2 of Algorithm 2 contain the state of at least one fault-free node,
say $k$.
This implies that $v_k[t-1] \leq w_j$.
This, in turn, implies that
$\mu[t-1] \leq w_j.$
\end{itemize}
Thus, for all $j\in \{i\}\cup N_i^*[t]$, we have $\mu[t-1] \leq w_j$.
Therefore,
\begin{eqnarray}
w_j-\psi\geq 0 \mbox{\normalfont~for all~} j\in\{i\} \cup N_i^*[t]
\label{e_algo_1}
\end{eqnarray}
Since weights in Equation~\ref{e_Z} add to 1, we can re-write that equation
as,
\begin{eqnarray}
v_i[t] - \psi &=& \sum_{j\in\{i\}\cup N_i^*[t]} a_i \, (w_j-\psi) \\
\nonumber
&\geq& a_i\, (w_j-\psi), ~~\forall j\in \{i\}\cup N_i^*[t]  ~~~~~\mbox{\normalfont from (\ref{e_algo_1})}
\end{eqnarray}
For non-faulty $j\in \{i\}\cup N_i^*[t]$, $w_j=v_j[t-1]$, therefore,
\begin{eqnarray}
v_i[t] -\psi &\geq & a_i\, (v_j[t-1]-\psi)
\end{eqnarray}
\end{proof}

\dividerline

Similar to the above result, we can also show the following lemma:

\begin{lemma}
\label{lemma:Psi}
Consider node $i\in\scriptv-\scriptf$.
Let $\Psi\geq U[t-1]$. Then, for $j\in \{i\}\cup N_i^*[t]$,
\[
\Psi - v_i[t] \geq  a_i ~ (\Psi - w_j)
\]
Specifically, for fault-free $j\in \{i\}\cup N_i^*[t]$,
\[
\Psi - v_i[t] \geq  a_i ~ (\Psi - v_j[t-1])
\]
\end{lemma}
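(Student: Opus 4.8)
The plan is to mirror the proof of Lemma~\ref{lemma:psi}, exploiting the symmetry between $\mu[t-1]$ and $U[t-1]$ and between the smallest-$f$ and largest-$f$ values eliminated in the update step of Algorithm~2. Where the previous lemma established a lower bound $\mu[t-1]\le w_j$ on every value that survives the trimming, here I would instead establish the dual upper bound $w_j\le U[t-1]$, and then read off the claimed inequality from the weighted-average form of (\ref{e_Z}).

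First I would show that $w_j\le U[t-1]$ for every $j\in\{i\}\cup N_i^*[t]$. For $j=i$ or fault-free $j\in N_i^*[t]$ we have $w_j=v_j[t-1]\le U[t-1]$ directly from the definition of $U[t-1]$. For a faulty $j\in N_i^*[t]$ (which can only occur when $f>0$), I invoke Corollary~\ref{cor:3f+1} to obtain $|N_i^-|\ge 3f+1$, hence $|r_i[t]|\ge 2f+1$; this guarantees that the $f$ largest values discarded in the trimming step include the state of at least one fault-free node $k$, so $w_j\le v_k[t-1]\le U[t-1]$. Combining the cases and using $\Psi\ge U[t-1]$ yields $\Psi-w_j\ge 0$ for all $j\in\{i\}\cup N_i^*[t]$.

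Next, since the weights in (\ref{e_Z}) are all equal to $a_i$ and sum to $1$, I rewrite $\Psi-v_i[t]=\sum_{j\in\{i\}\cup N_i^*[t]} a_i\,(\Psi-w_j)$. Because every summand is nonnegative, dropping all but one term gives $\Psi-v_i[t]\ge a_i\,(\Psi-w_j)$ for each fixed $j$, which is the general claim. Specializing to fault-free $j$, where $w_j=v_j[t-1]$, gives the second stated inequality.

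I do not anticipate a genuine obstacle: the argument is the exact order-reversing dual of Lemma~\ref{lemma:psi}. The only point requiring a moment's care is the faulty-node case, where one must confirm that trimming the largest $f$ values, rather than the smallest, still leaves a fault-free ``witness'' above any surviving faulty value; this is symmetric to the corresponding step in the previous proof and relies on the same counting bound $|r_i[t]|\ge 2f+1$ supplied by Corollary~\ref{cor:3f+1}.
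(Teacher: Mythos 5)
Your proposal is correct and is exactly the argument the paper intends: the paper omits the proof of Lemma~\ref{lemma:Psi}, stating only that it follows ``similar to'' Lemma~\ref{lemma:psi}, and your order-reversing dual---establishing $w_j\le U[t-1]$ via a fault-free witness among the $f$ largest eliminated values (using Corollary~\ref{cor:3f+1}), then expanding $\Psi-v_i[t]=\sum_{j\in\{i\}\cup N_i^*[t]}a_i(\Psi-w_j)$ and dropping nonnegative terms---is precisely that intended mirror of the paper's proof of Lemma~\ref{lemma:psi}.
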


~

Then we present the main lemma used in proof of convergence. Note that below, we use parameter $\alpha$ defined in (\ref{e_alpha}). Recall that in (\ref{e_Z}) in Algorithm 2, $a_i > 0$ for all $i$, and thus, $\alpha > 0$.

~

\begin{lemma}
\label{lemma:bounded_value}
At the end of the $s$-th round, suppose that
the fault-free nodes in $\scriptv-\scriptf$ can be partitioned into
non-empty sets
$R$ and $L$ such that (i) $R$ propagates to $L$ in $l$ rounds,
and (ii) the states of nodes
in $R$ are confined to an interval of length $\leq \frac{U[s]-\mu[s]}{2}$.
Then, 
\begin{eqnarray}
U[s+l]-\mu[s+l] \leq \left(1-\frac{\alpha^l}{2}\right)(U[s] - \mu[s])
\label{e:convergence:1}
\end{eqnarray}
\end{lemma}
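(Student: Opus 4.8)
The plan is to reduce the claim to two symmetric one-sided estimates and then add them. Write $a=\min_{j\in R}v_j[s]$ and $b=\max_{j\in R}v_j[s]$, so that hypothesis (ii) reads $b-a\le \frac{U[s]-\mu[s]}{2}$. Let $A_0,\dots,A_l$ and $B_0,\dots,B_l$ be the propagating sequences witnessing that $R$ propagates to $L$ in $l$ rounds, so that $A_0=R$, $B_0=L$, $B_l=\Phi$, and $A_l=R\cup L=\scriptv-\scriptf$. I would prove, by induction on $\tau$, the lower envelope
\[
v_k[s+\tau]\ \ge\ \mu[s]+\alpha^\tau\,(a-\mu[s])\qquad\text{for every }k\in A_\tau,
\]
together with the symmetric upper envelope $v_k[s+\tau]\le U[s]-\alpha^\tau(U[s]-b)$ for $k\in A_\tau$. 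Throughout I take $\psi=\mu[s]$ and $\Psi=U[s]$; since validity (Theorem~\ref{thm:validity}) makes $\mu[\cdot]$ non-decreasing and $U[\cdot]$ non-increasing, $\psi\le\mu[t-1]$ and $\Psi\ge U[t-1]$ hold for all rounds $t>s$, so Lemmas~\ref{lemma:psi} and~\ref{lemma:Psi} are available at every step.

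For the inductive step I would treat the two kinds of nodes in $A_{\tau+1}$ separately. A node $k$ already in $A_\tau$ keeps the bound because $k\in\{k\}\cup N_k^*[s+\tau+1]$ is fault-free, so Lemma~\ref{lemma:psi} gives $v_k[s+\tau+1]-\mu[s]\ge a_k(v_k[s+\tau]-\mu[s])\ge \alpha^{\tau+1}(a-\mu[s])$. The real work is a newly absorbed node $v\in in(A_\tau\ARightarrow B_\tau)$, which has at least $2f+1$ incoming links from the (fault-free) set $A_\tau$; since in Algorithm~2 node $v$ waits for all but $f$ of its incoming values, at least $f+1$ of the values it receives come from $A_\tau$, each at least $T:=\mu[s]+\alpha^\tau(a-\mu[s])$ by the induction hypothesis. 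I then argue that $v$ still retains a value $\ge T$ after trimming: if some $A_\tau$-value lies among the $f$ smallest that are discarded, then every retained value is $\ge T$ and the first form of Lemma~\ref{lemma:psi} applied to any $j\in N_v^*[s+\tau+1]$ yields the bound; otherwise none of the $\ge f+1$ $A_\tau$-values is trimmed from below, and since only $f$ values are trimmed from above at least one fault-free $A_\tau$-node survives into $N_v^*[s+\tau+1]$, so the second form of Lemma~\ref{lemma:psi} applies. Either way $v_v[s+\tau+1]\ge \mu[s]+\alpha^{\tau+1}(a-\mu[s])$, completing the induction; the upper envelope is obtained identically from Lemma~\ref{lemma:Psi}.

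Finally, evaluating at $\tau=l$ uses $A_l=\scriptv-\scriptf$: every fault-free node satisfies $v_k[s+l]\ge \mu[s]+\alpha^l(a-\mu[s])$ and $v_k[s+l]\le U[s]-\alpha^l(U[s]-b)$, hence $\mu[s+l]\ge \mu[s]+\alpha^l(a-\mu[s])$ and $U[s+l]\le U[s]-\alpha^l(U[s]-b)$. Subtracting these and using $(U[s]-b)+(a-\mu[s])=(U[s]-\mu[s])-(b-a)\ge \frac{U[s]-\mu[s]}{2}$ gives $U[s+l]-\mu[s+l]\le\bigl(1-\frac{\alpha^l}{2}\bigr)(U[s]-\mu[s])$, as required. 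Note that adding the two one-sided bounds is what lets the $\frac12$ appear without any case split on where the interval $[a,b]$ sits. I expect the delicate point to be the absorption step: the naive counting ``$\ge f+1$ $A_\tau$-values received, only $2f$ trimmed'' does not by itself guarantee a surviving $A_\tau$-node, so the case split on whether the bottom trim already eliminates an $A_\tau$-value --- switching between the two forms of Lemma~\ref{lemma:psi} --- is what makes the argument go through.
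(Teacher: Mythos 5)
Your proof is correct and follows essentially the same route as the paper's: the same one-sided envelopes $v_i[s+\tau]-\mu[s]\ge\alpha^\tau(m-\mu[s])$ and $U[s]-v_i[s+\tau]\ge\alpha^\tau(U[s]-M)$ proved by induction along the propagating sequence, using Lemmas~\ref{lemma:psi} and~\ref{lemma:Psi} with $\psi=\mu[s]$, $\Psi=U[s]$, the counting step $|N_i^-\cap R_\tau|\ge 2f+1\Rightarrow|N_i^@\cap R_\tau|\ge f+1$, and the same final algebra. Your case split at the absorption step (whether an $A_\tau$-value falls in the bottom trim) is just a reorganization of the paper's split (whether some $A_\tau$-value survives trimming), and both handle the delicate point you flag in the same way.
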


\begin{proof}
Since $R$ propagates to $L$, as 
per Definition~\ref{def:absorb_sequence},
there exist sequences of sets
$R_0,R_1,\cdots,R_l$ and $L_0,L_1,\cdots,L_l$, where
\begin{itemize}
\item $R_0=R$, $L_0=L$, $L_l=\Phi$, for $0\leq \tau<l$, $L_\tau \neq \Phi$, and
\item for $0\leq \tau\leq l-1$,
\begin{list}{}{}
\item[*] $R_\tau\ARightarrow L_\tau$,
\item[*] $R_{\tau+1} = R_\tau\cup in(R_\tau\ARightarrow L_\tau)$, and
\item[*] $L_{\tau+1} = L_\tau - in(R_\tau\ARightarrow L_\tau)$
\end{list}
\end{itemize}
Let us define the following bounds on the states of the nodes
in $R$ at the end of the $s$-th round:
\begin{eqnarray}
M & = & max_{j\in R}~ v_j[s] \\ \label{e_M}
m & = & min_{j\in R}~ v_j[s] \label{e_m}
\end{eqnarray}
By the assumption in the statement of Lemma~\ref{lemma:bounded_value},
\begin{eqnarray}
M-m\leq \frac{U[s]-\mu[s]}{2} \label{e_M_m}
\end{eqnarray}
Also, $M\leq U[s]$ and $m\geq \mu[s]$.
Therefore, $U[s]-M\geq 0$ and $m-\mu[s]\geq 0$.

The remaining proof of Lemma~\ref{lemma:bounded_value} relies
on derivation of the three intermediate claims below. 

\shortdividerline

\begin{claim}
\label{claim:1}
For $0\leq \tau\leq l$, for each node $i\in R_\tau$,
\begin{eqnarray}
v_i[s+\tau] - \mu[s] \geq  \alpha^{\tau}(m-\mu[s])
\label{e_ind_1}
\end{eqnarray}
\end{claim}

\noindent{\em Proof of Claim \ref{claim:1}:}
The proof is by induction.

{\em Induction basis:}
For some $\tau$, $0\leq \tau< l$, for each node $i\in R_\tau$,
(\ref{e_ind_1}) holds.
By definition of $m$, the induction basis holds true for $\tau=0$.

\noindent{\em Induction:}
Assume that the induction basis holds true for some $\tau$, $0\leq \tau<l$.
Consider $R_{\tau+1}$.
Observe that $R_\tau$ and $R_{\tau+1}-R_\tau$ form a partition of $R_{\tau+1}$;
let us consider each of these sets separately.
\begin{itemize}
\item Set $R_\tau$: By assumption, for each $i\in R_\tau$, (\ref{e_ind_1})
holds true.
By validity of Algorithm 2, $\mu[s] \leq \mu[s+\tau]$.
Therefore, setting $\psi=\mu[s]$ in Lemma~\ref{lemma:psi},
we get,
\begin{eqnarray*}
v_i[s+\tau+1] - \mu[s] & \geq 
& a_i~(v_i[s+\tau] - \mu[s]) \\
& \geq & a_i~ \alpha^{\tau}(m-\mu[s]) ~~~~~~~~ \mbox{due to (\ref{e_ind_1})} \\
& \geq & \alpha^{\tau+1}(m-\mu[s])  ~~~~~~~~~~ \mbox{due to (\ref{e_alpha})}
\end{eqnarray*}

\item Set $R_{\tau+1}-R_\tau$: Consider a node $i\in R_{\tau+1}-R_\tau$. By definition
of $R_{\tau+1}$, we have that $i\in in(R_\tau\ARightarrow L_\tau)$.
Thus,
\[ |N_i^- \cap R_\tau| \geq 2f+1 \] 

It follows that 

\[ |N_i^@[s+\tau] \cap R_\tau| \geq f+1 \] 

In Algorithm 2, $2f$ values ($f$ smallest and $f$ largest) received by
node $i$ are eliminated before $v_i[s+\tau+1]$ is computed at
the end of $(s+\tau+1)$-th round. Consider two possibilities:
\begin{itemize}
\item Value received from one of the nodes in $N_i^@[s+\tau] \cap R_\tau$ is
{\bf not} eliminated. Suppose that this value is received from
fault-free node $p\in N_i^@[s+\tau] \cap R_\tau$. Then, by an argument similar to the
previous case, we can set $\psi=\mu[s]$
in Lemma~\ref{lemma:psi}, to obtain,
\begin{eqnarray*}
v_i[s+\tau+1] -\mu[s] & \geq & a_i~(v_p[s+\tau]-\mu[s]) \\
& \geq & a_i~ \alpha^{\tau}(m-\mu[s]) ~~~~~~~~ \mbox{due to (\ref{e_ind_1})} \\
& \geq & \alpha^{\tau+1}(m-\mu[s])  ~~~~~~~~~~ \mbox{due to (\ref{e_alpha})}
\end{eqnarray*}

\item Values received from {\bf all} (there are at least $f+1$) nodes
in $N_i^@[s+\tau] \cap R_\tau$ are eliminated. Note that in this case $f$ must be
non-zero (for $f=0$, no value is eliminated, as already considered in the
previous case). By Corollary~\ref{cor:3f+1}, we know that
each node must have at least $3f+1$ incoming edges. Thus, $N_i^@[t+\tau] \geq 2f+1$.
Since at least $f+1$ values from nodes in $N_i^@[t+\tau] \cap R_\tau$ are
eliminated, and there are at least $2f+1$ values to choose
from, it follows that the values that are {\bf not} eliminated
are within the interval to which the values from $N_i^@[s+\tau] \cap R_\tau$ belong.
Thus, there exists a node $k$ (possibly faulty) from whom node $i$ receives
some value $w_k$ -- which is not eliminated -- and 
a fault-free node $p\in N_i^@[t+\tau] \cap R_\tau$ such that 
\begin{eqnarray}
v_p[s+\tau] &\leq & w_k \label{e_wk}
\end{eqnarray}
Then by setting $\psi=\mu[s]$ in Lemma~\ref{lemma:psi} we have
\begin{eqnarray*}
v_i[s+\tau+1] -\mu[s] & \geq & a_i~(w_k -\mu[s]) \\
& \geq & a_i~(v_p[s+\tau]-\mu[s]) ~~~~~~~~ \mbox{due to (\ref{e_wk})} \\
& \geq & a_i~ \alpha^{\tau}(m-\mu[s]) ~~~~~~~~ \mbox{due to (\ref{e_ind_1})} \\
& \geq & \alpha^{\tau+1}(m-\mu[s])  ~~~~~~~~~~ \mbox{due to (\ref{e_alpha})}
\end{eqnarray*}
\end{itemize}
\end{itemize}
Thus, we have shown that for all nodes in $R_{\tau+1}$,
\[
v_i[s+\tau+1] -\mu[s] 
\geq \alpha^{\tau+1}(m-\mu[s]) 
\]
This completes the proof of Claim \ref{claim:1}.

\shortdividerline

\begin{claim}
\label{claim:2}
For each node $i\in \scriptv-\scriptf$,
\begin{eqnarray}
v_i[s+l] - \mu[s] \geq  \alpha^{l}(m-\mu[s])
\label{e_ind_2}
\end{eqnarray}
\end{claim}

\noindent{\em Proof of Claim \ref{claim:1}:}

Note that by definition, $R_l = \scriptv-\scriptf$. Then the proof follows by setting $\tau = l$ in the above Claim \ref{claim:1}.

\comment{====== Notation replacement

$|N_i^- \cap R_{\tau}| \geq f + 1$ ==> $|N_i^@[s+\tau] \cap R_{\tau}| \geq 2f + 1$, so $|N_i^@[s+\tau] \cap R_{\tau}| \geq 2f + 1$

$p \in N_i^- \cap R_{\tau}|$ ==> $p \in N_i^@[s+\tau] \cap R_{\tau}|$

$N_i^- \cap R_{\tau}$ ==> $N_i^@[s+\tau] \cap R_{\tau}$

=====}

By a procedure similar to the derivation of Claim \ref{claim:2} above,
we can also prove the claim below. 

\begin{claim}
\label{claim:3}
For each node $i\in \scriptv-\scriptf$,
\begin{eqnarray}
U[s] - v_i[s+l] \geq  \alpha^{l}(U[s]-M)
\label{e_ind_3a}
\end{eqnarray}
\end{claim}

~

\noindent
Now let us resume the proof of the Lemma \ref{lemma:bounded_value}.
Note that $R_l=\scriptv-\scriptf$. Thus, 
\begin{eqnarray}
U[s+l] & = & \max_{i\in\scriptv-\scriptf}~ v_i[s+l] \nonumber \\
& \leq & U[s] - \alpha^{l}(U[s]-M) \mbox{~~~~~~~~~~~by (\ref{e_ind_3a})}
\label{e_U}
\end{eqnarray}
and
\begin{eqnarray}
\mu[s+l] & = & \min_{i\in\scriptv-\scriptf}~ v_i[s+l] \nonumber \\
& \geq & \mu[s] + \alpha^{l}(m-\mu[s]) \mbox{~~~~~~~~~~~by (\ref{e_ind_2}})
\label{e_mu}
\end{eqnarray}
Subtracting (\ref{e_mu}) from (\ref{e_U}),
\begin{eqnarray}
U[s+l]-\mu[s+l] & \leq & U[s] - \alpha^{l}(U[s]-M)  - \mu[s] - \alpha^{l}(m-\mu[s]) \nonumber \\
&=& (1-\alpha^l)(U[s]-\mu[s]) + \alpha^l(M-m) \\
&\leq& (1-\alpha^l)(U[s]-\mu[s]) + \alpha^l~\frac{U[s]-\mu[s]}{2}
		\mbox{~~~~~~~~~~by (\ref{e_M_m})} \\
&\leq& (1-\frac{\alpha^l}{2})(U[s]-\mu[s]) 
\end{eqnarray}
This concludes the proof of Lemma~\ref{lemma:bounded_value}.

\end{proof}

~

Now, we are able to prove the convergence of Algorithm 2. Note that this proof is essentially identical to the synchronous case \cite{IBA_sync}.

\begin{theorem}
\label{thm:convergence}
Suppose that $G(\scriptv, \scripte)$ satisfies Theorem~\ref{thm:nc_a}.
Then Algorithm 2 satisfies the {\em convergence} condition.
\end{theorem}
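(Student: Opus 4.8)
The plan is to combine the contraction estimate of Lemma~\ref{lemma:bounded_value} with the propagation dichotomy of Lemma~\ref{lemma:must_absorb}, showing that the gap $U[t]-\mu[t]$ shrinks by a fixed multiplicative factor over each of a sequence of bounded-length phases, so that it tends to $0$. Throughout I would use that $G(\scriptv,\scripte)$ satisfies Theorem~\ref{thm:nc_a}, which is exactly what the two lemmas require, and that validity (Theorem~\ref{thm:validity}) makes $U[t]-\mu[t]$ non-increasing.

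First I would fix a round $s$ and assume $U[s]>\mu[s]$ (if $U[s]=\mu[s]$, validity already forces $U[t]=\mu[t]$ for all $t\geq s$, so there is nothing to prove). Let $C=\frac{\mu[s]+U[s]}{2}$ be the midpoint of the current interval, and split the fault-free nodes into $R=\{\,i\in\scriptv-\scriptf : v_i[s]\leq C\,\}$ and $L=\{\,i\in\scriptv-\scriptf : v_i[s]>C\,\}$. The node attaining $\mu[s]$ lies in $R$ and the node attaining $U[s]$ lies in $L$, so both sets are non-empty; moreover the states in $R$ lie in $[\mu[s],C]$ and those in $L$ in $(C,U[s]]$, so each set is confined to an interval of length at most $\frac{U[s]-\mu[s]}{2}$, which is precisely the hypothesis (ii) of Lemma~\ref{lemma:bounded_value}.

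Next I would apply Lemma~\ref{lemma:must_absorb} to the partition $R,L,\scriptf$ (legitimate since $|\scriptf|\leq f$ and $R,L$ are non-empty): either $R$ propagates to $L$, or $L$ propagates to $R$, in some number $l$ of rounds. In either case the set that propagates is confined to a half-length interval, so Lemma~\ref{lemma:bounded_value} applies directly in the first case, and with the roles of $R$ and $L$ interchanged in the second, yielding
\[
U[s+l]-\mu[s+l]\leq\left(1-\frac{\alpha^l}{2}\right)(U[s]-\mu[s]).
\]
To turn this into a \emph{uniform} contraction I would invoke the bound $l\leq n-2f-1$ recorded after Definition~\ref{def:absorb_sequence}, together with $0<\alpha\leq 1$, so that $\alpha^l\geq\alpha^{\,n-2f-1}$ and hence $U[s+l]-\mu[s+l]\leq\beta\,(U[s]-\mu[s])$ with $\beta=1-\frac{\alpha^{\,n-2f-1}}{2}<1$, a single constant independent of $s$.

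Finally I would iterate the estimate. Starting from $s_0=0$ and repeatedly choosing the phase length $l_k\geq 1$ produced above, I set $s_{k+1}=s_k+l_k$ and obtain $U[s_k]-\mu[s_k]\leq\beta^{k}(U[0]-\mu[0])\to 0$. Since each $l_k\geq 1$, the boundaries satisfy $s_k\to\infty$, and since validity makes $U[t]-\mu[t]$ non-increasing, the gap at every intermediate round is squeezed between consecutive phase values; therefore $\lim_{t\to\infty}(U[t]-\mu[t])=0$, which is the convergence condition. I expect the main obstacle to be precisely this uniform control of the contraction factor: because the phase length $l$ is determined by the round and may vary, one must bound it uniformly (via $l\leq n-2f-1$) to extract one $\beta<1$ valid for every phase. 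By contrast, the two symmetric cases of the dichotomy and the degenerate case $U[s]=\mu[s]$ are routine.
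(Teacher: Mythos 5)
Your proposal is correct and follows essentially the same route as the paper's proof: split the fault-free nodes at the midpoint $\frac{U[s]+\mu[s]}{2}$, invoke Lemma~\ref{lemma:must_absorb} to get a propagating set confined to a half-length interval, apply the contraction estimate of Lemma~\ref{lemma:bounded_value}, and iterate over successive phases using validity to control intermediate rounds. The only (cosmetic) difference is that you extract a single uniform contraction factor $\beta = 1-\frac{\alpha^{\,n-2f-1}}{2}$ up front, whereas the paper carries the variable factors $\bigl(1-\frac{\alpha^{\tau_j-\tau_{j-1}}}{2}\bigr)$ in a product and appeals to the same boundedness of $l(s)$ to make the product small.
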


\begin{proof}

Our goal is to prove that, given any $\epsilon>0$, there
exists $\tau$ such that
\begin{equation}
U[t]-\mu[t] \leq \epsilon ~~~\forall t\geq \tau
\end{equation}

Consider the $s$-th round, for some $s\geq 0$.
If $U[s]-\mu[s]=0$, then the algorithm has already converged, and the proof
is complete, with $\tau=s$.

Now consider the case when $U[s]-\mu[s]>0$.
Partition $\scriptv-\scriptf$ into two subsets, $A$ and $B$, such
that, for each node $i\in A$, 
$v_i[s]\in \left[\mu[s], \frac{U[s]+\mu[s]}{2}\right)$, and
for each node $j\in B$,
$v_j[s] \in \left[\frac{U[s]+\mu[s]}{2}, U[s]\right]$.
By definition of $\mu[s]$ and $U[s]$, there exist fault-free nodes
$i$ and $j$ such that $v_i[s]=\mu[s]$ and $v_j[s]=U[s]$.
Thus, sets $A$ and $B$ are both non-empty.
By Lemma \ref{lemma:must_absorb}, one of the following two conditions
must be true:
\begin{itemize}
\item Set $A$ propagates to set $B$. Then, define $L=B$ and $R=A$.
The states of all the nodes in $R=A$ are confined within an
interval of length
$<\frac{U[s]+\mu[s]}{2} - \mu[s] \leq \frac{U[s]-\mu[s]}{2}$.

\item Set $B$ propagates to set $A$. Then, define $L=A$ and $R=B$.
In this case, states of all the nodes in $R=B$ are confined within an interval of length
$\leq U[s]-\frac{U[s]+\mu[s]}{2} \leq \frac{U[s]-\mu[s]}{2}$.

\end{itemize}
In both cases above, we have found non-empty sets $L$ and $R$
such that (i) $L,R$ is a partition of $\scriptv-\scriptf$,
(ii) $R$ propagates to $L$, and (iii) the states in $R$ are confined
to an interval of length $\leq \frac{U[s]-\mu[s]}{2}$.
Suppose that $R$ propagates to $L$ in $l(s)$ steps, where $l(s)\geq 1$.
By Lemma~\ref{lemma:bounded_value},
\begin{eqnarray}
U[s+l(s)]-\mu[s+l(s)] \leq \left( 1-\frac{\alpha^{l(s)}}{2}\right)(U[s] - \mu[s])
\label{e_t}
\end{eqnarray}
Since $n-f-1 \geq l(s)\geq 1$ and $0<\alpha\leq 1$, $0\leq \left( 1-\frac{\alpha^{l(s)}}{2}\right)<1$.

Let us define the following sequence of iteration indices\footnote{Without loss of generality, we assume that $U[\tau_i]-\mu[\tau_i] > 0$. Otherwise, the statement is trivially true due to the validity shown in Theorem \ref{thm:validity}.}:
\begin{itemize}
\item $\tau_0 = 0$,
\item for $i>0$, $\tau_i = \tau_{i-1} + l(\tau_{i-1})$, where $l(s)$ for any given $s$ was defined above.
\end{itemize}

By repeated application of the argument leading to (\ref{e_t}), we can prove
that, for $i\geq 0$,

\begin{eqnarray}
U[\tau_i]-\mu[\tau_i] \leq \left( \Pi_{j=1}^i\left( 1-\frac{\alpha^{\tau_j-\tau_{j-1}}}{2}\right)\right)~(U[0] - \mu[0])
\end{eqnarray}

For a given $\epsilon$,
by choosing a large enough $i$, we can obtain
\[
\left(\Pi_{j=1}^i\left( 1-\frac{\alpha^{\tau_j-\tau_{j-1}}}{2}\right)\right)~(U[0] - \mu[0]) \leq \epsilon
\]
and, therefore,
\begin{eqnarray}
U[\tau_i]-\mu[\tau_i] \leq  \epsilon
\end{eqnarray}
For $t\geq \tau_i$, by validity of Algorithm 1, it follows that
\[
U[t]-\mu[t] \leq
U[\tau_i]-\mu[\tau_i] \leq  \epsilon
\]
This concludes the proof.
\end{proof}

\section{Conclusion}

In this report, we present two sets of results. First, we prove another necessary and sufficient condition for the existence of synchronous IABC in arbitrary directed graphs. The condition is more intuitive  than the one in \cite{IBA_sync}. We also believe that the results can be extended to partially asynchronous algorithmic model presented in \cite{AA_convergence_markov}. In the second part, we extend our earlier results to asynchronous systems.


\end{document}